\documentclass[a4papers,runningheads]{llncs}

\newif\ifblind
\blindfalse

\newif\ifdraft
\drafttrue

\newif\iflong
\longfalse

\usepackage{microtype}

\usepackage[T1]{fontenc}
\usepackage{newtxtext,newtxmath}
\usepackage[scaled=0.81]{beramono}

\usepackage{graphicx}

\usepackage{hyperref}

\usepackage[inline]{enumitem}
\setlist[enumerate]{label=\emph{\roman*})}

\usepackage{tikz}
\usetikzlibrary{shapes,arrows,arrows.meta,positioning,calc,fit,intersections,through,petri,patterns,patterns.meta}
\tikzset{
  node distance=9mm and 5mm,
  acq/.style={draw=none,preaction={fill=acquirecol},pattern={Lines[distance=2pt,line width=0.5pt,angle=90]}},
  rel/.style={draw=none,preaction={fill=releasecol},pattern={Lines[distance=2pt,line width=0.5pt,angle=0]}},
}

\tikzset{
    >=latex,arr l2 t1/.style={
        append after command={
            (\tikzlastnode.north west) edge[#1, thick, l2t1col] ++(135:5mm)
        }
    },
    arr l2 t2/.style={
        append after command={
            ($(\tikzlastnode.north west)!0.25!(\tikzlastnode.north east)$) edge[#1, thick, l2t2col] ++(90:4mm)
        }
    },
    arr l1 t2/.style={
        append after command={
            ($(\tikzlastnode.north west)!0.75!(\tikzlastnode.north east)$) edge[#1, thick, l1t2col] ++(90:4mm)
        }
    },
    arr l1 t1/.style={
        append after command={
            (\tikzlastnode.north east) edge[#1, thick, l1t1col] ++(45:5mm)
        }
    },
node arrows/.style={#1}
  }

\DeclareRobustCommand{\inlineplace}[1]{\tikz[baseline=-0.5ex] \node[circle, fill=#1, inner sep=0pt, minimum size=2ex] {};}

\newlist{features}{description}{1}
\iflong
\setlist[features]{font=\bfseries,        labelsep=0.6em,        leftmargin=*,          topsep=\parskip,       itemsep=0.5\parskip,   listparindent=0mm,     itemindent=-2.5mm      }
\else
\setlist[features]{font=\bfseries,        labelsep=0.6em,        leftmargin=*,          topsep=\parskip,       itemsep=0.5\parskip,   listparindent=0mm,     itemindent=-1.5mm      }
\fi

\usepackage{ragged2e}

\usepackage{booktabs}
\usepackage{relsize}
\usepackage{changepage}            \usepackage[skip=5pt,font=small,strut=off]{caption}
\usepackage[skip=1pt]{subcaption}  

\DeclareGraphicsExtensions{.pdf,.png}

\usepackage{amsmath,amsfonts}
\usepackage{mathtools}
\usepackage{array}
\usepackage{xspace}
\usepackage{multicol,multirow}
\usepackage{xurl}
\usepackage{fontawesome}
\usepackage{stmaryrd}
\usepackage{mathpartir}
\usepackage{dsfont}

\definecolor{jimplecol}{HTML}{1b9e77}
\definecolor{rbcol}{HTML}{d95f02}
\colorlet{petrifycol}{rbcol}
\definecolor{pncol}{HTML}{7570b3}
\definecolor{othercol}{HTML}{e7298a}
\colorlet{graycol}{black!25}

\definecolor{l1t1col}{HTML}{ca0020}
\definecolor{l1t2col}{HTML}{f4a582}
\definecolor{l2t2col}{HTML}{92c5de}
\definecolor{l2t1col}{HTML}{0571b0}

\definecolor{acquirecol}{HTML}{a6cee3}
\definecolor{releasecol}{HTML}{b2df8a}
\DeclareRobustCommand{\stylenode}[1]{\begin{tikzpicture}[baseline=-0.5ex]\draw[#1] (0,0) rectangle (3mm,3mm);\end{tikzpicture}}

\newcommand{\running}{\text{\faCaretRight}}
\newcommand{\terminated}{\text{\faClose}}
\newcommand{\spawnMark}{\text{\faPlayCircleO}}
\newcommand{\termMark}{\text{\faStopCircleO}}
\newcommand{\naturals}{\ensuremath{\mathds{N}}}
\newcommand{\encoding}{\ensuremath{\mathcal{E}}}
\newcommand{\translation}{\ensuremath{\mathcal{T}}}
\newcommand{\joins}[1]{\ensuremath{\bowtie\!#1}}
\newcommand{\rets}[1]{\ensuremath{\hookrightarrow\!#1}}
\newcommand{\pndeadlock}{\ensuremath{\mathds{D}}}

\newcommand{\nextloc}{\ensuremath{\curvearrowright}}

\usepackage{boxedminipage}

\usepackage{listings}

\lstdefinestyle{displayed}{
  numbers=left, firstnumber=last, 
  frame=single, breaklines=true,
  numbersep=2pt,
frame=tb,
numberstyle=\tiny,
  tabsize=2,
  captionpos=b,
xleftmargin=0mm, xrightmargin=0mm, basicstyle=\ttfamily\scriptsize,
  keywordstyle=\bfseries\ttfamily,
  commentstyle=\color{darkgray}\itshape\ttfamily,
  keepspaces=true,
  columns=fixed,
  escapeinside={(*}{*)},
  mathescape=true,
  showstringspaces=false
}

\lstdefinestyle{displayed-nln}{
  style=displayed,
  numbers=none,
  basicstyle=\ttfamily\footnotesize,
}

\lstdefinestyle{plain}{
  numbers=none, frame=none, breaklines=false,
tabsize=2,
xleftmargin=2mm, xrightmargin=2mm, basicstyle=\ttfamily\scriptsize,
  keywordstyle=\bfseries\ttfamily,
  commentstyle=\color{darkgray}\itshape\ttfamily,
  keepspaces=true,
  columns=fullflexible,
  escapeinside={(*}{*)},
  mathescape=true,
  showstringspaces=false
}

\lstdefinestyle{inlined} {
  numbers=none,
  frame=none
}

\lstdefinelanguage{JavaRecent}[]{Java}
{
  morekeywords={var,yield}
}

\lstdefinelanguage{Jimple}[]{}
{
  morekeywords=[2]{method,monitor_enter,monitor_exit,noop,goto,if,then,switch,return,invoke,invokestatic,throw},
  morekeywords=[4]{label,error,ok},
  literate={:=}{$:$=}2
  {!}{$\lnot\ $}1
  {!=}{$\neq$}1
  {||}{$\lor\;$}2
  {&&}{$\land\;$}2
  {==>}{$\,\Longrightarrow\;$}4
  {forall}{$\forall$}1
  {exists}{$\exists$}1
  {:}{$\colon$}1
  {::}{$\colon\!\!\colon$}1
  {:=}{$:$=}2
  {<}{$<$}1
  {<=}{$\le$}1
  {>}{$>$}1
  {>=}{$\ge$}1
  ,
  mathescape=true,
  escapeinside={(*}{*)},
  identifierstyle=\ttfamily,
  keywordstyle=[2]{\color{jimplecol}\bfseries\ttfamily},
  keywordstyle=[3]{\color{jimplecol}\bfseries\ttfamily},
  keywordstyle=[4]{\itshape\underbar},
}

\lstdefinelanguage{RockBottom}[]{}
{
  morekeywords=[2]{def,begin,end,acquire,release,skip,goto,jump,call,read,write,fork,join},
  morekeywords=[4]{entry,exit,label,error,ok},
  literate={:=}{$:$=}2
  {!}{$\lnot\ $}1
  {!=}{$\neq$}1
  {||}{$\lor\;$}2
  {&&}{$\land\;$}2
  {==>}{$\,\Longrightarrow\;$}4
  {forall}{$\forall$}1
  {exists}{$\exists$}1
  {:}{$\colon$}1
  {::}{$\colon\!\!\colon$}1
  {:=}{$:$=}2
  {<}{$<$}1
  {<=}{$\le$}1
  {>}{$>$}1
  {>=}{$\ge$}1
  ,
  mathescape=true,
  escapeinside={(*}{*)},
  identifierstyle=\ttfamily,
  keywordstyle=[2]{\color{rbcol}\bfseries\ttfamily},
  keywordstyle=[3]{\color{rbcol}\bfseries\ttfamily},
  keywordstyle=[4]{\itshape\underbar},
}

\newcommand{\J}[1]{\mbox{\lstinline[basicstyle=\ttfamily,language=JavaRecent]|#1|}}
\newcommand{\Bc}[1]{\mbox{\lstinline[basicstyle=\ttfamily,language=JVMIS]|#1|}}
\newcommand{\Ji}[1]{\mbox{\lstinline[basicstyle=\ttfamily,language=Jimple]|#1|}}
\newcommand{\RB}[1]{\mbox{\lstinline[basicstyle=\ttfamily,language=RockBottom]|#1|}}
\newcommand{\nt}[1]{\ensuremath{\textsl{#1}}}
\newcommand{\locs}{\ensuremath{\mathcal{L}}}

\newcommand{\petrify}{{\smaller[0.5]{\textsc{Petri\-fy}}}\xspace}
\newcommand{\tech}{\petrify}
\newcommand{\jpetrify}{{\smaller[0.5]{j\textsc{Petri\-fy}}}\xspace}
\newcommand{\tool}{\jpetrify}
\newcommand{\rb}{{\smaller[0.5]{\textsc{rb}}}\xspace}

\newcommand{\success}{\textcolor{green}{\faCheck}}
\newcommand{\falsepos}{\textcolor{red}{\faPlus}}
\newcommand{\falseneg}{\textcolor{red}{\faMinus}}
\newcommand{\unsupp}{\textcolor{olive}{\faBan}}
\newcommand{\timeout}{\textcolor{orange}{\faHourglassEnd}}
\newcommand{\isCorrect}{\textcolor{green}{\faCheckSquare}}
\newcommand{\isBug}{\textcolor{red}{\faBug}}

\newcounter{examplecnt}

\newenvironment{expara}[1][]{\vspace{1pt}\noindent\refstepcounter{examplecnt}\textbf{Example~\theexamplecnt\ifthenelse{\equal{#1}{}}{}{(#1)}. }}{\hfill\scriptsize$\blacksquare$\vspace{1pt}\noindent\ignorespaces}

\let\llncssubparagraph\subparagraph
\let\subparagraph\paragraph
\usepackage[compact]{titlesec}
\let\subparagraph\llncssubparagraph

\iflong
\else
\makeatletter
\newcommand\nicepar{\@startsection{paragraph}{4}{\z@}{-1\p@ \@plus -1\p@ \@minus -1\p@}{-0.3em \@plus -0.1em \@minus -0.1em}{\itshape\normalsize}}
\newcommand\nicesection{\@startsection{section}{1}{\z@}{-8\p@ \@plus -3\p@ \@minus -3\p@}
  {4\p@ \@plus 2\p@ \@minus 1\p@}
  {\normalfont\large\bfseries\boldmath}}
\newcommand\nicesubsection{\@startsection{subsection}{2}{\z@}{-5\p@ \@plus -2\p@ \@minus -2\p@}{2\p@ \@plus 1\p@ \@minus 1\p@}{\normalfont\normalsize\bfseries\boldmath}}
\newcommand\nicesubsubsection{\@startsection{subsubsection}{3}{\z@}{-3\p@ \@plus -1\p@ \@minus -1\p@}{-0.3em \@plus -0.1em \@minus -0.1em}{\normalfont\normalsize\bfseries\boldmath}}
\makeatother
\renewcommand{\subsubsection}{\nicesubsubsection}
\renewcommand{\subsection}{\nicesubsection}
\renewcommand{\section}{\nicesection}
\renewcommand{\paragraph}{\nicepar}
\fi

\usepackage{ifthen}

\ifdraft
\usepackage[colorinlistoftodos,textsize=scriptsize,obeyFinal,textwidth=33mm]{todonotes}

\DeclareDocumentCommand{\ReviewNote}{s o m O{white}}{\todo[color=#4,\IfBooleanTF{#1}{inline}{}]{\IfNoValueF{#2}{\textbf{#2:}\xspace}#3}
}
\else
\DeclareDocumentCommand{\ReviewNote}{s o m O{white}}{}
\fi

\DeclareDocumentCommand{\caf}{s m}{\IfBooleanTF{#1}{\ReviewNote*{#2}[yellow]}{\ReviewNote{#2}[yellow]}}
\DeclareDocumentCommand{\aks}{s m}{\IfBooleanTF{#1}{\ReviewNote*{#2}[red!70!white]}{\ReviewNote{#2}[red!70!white]}}

\DeclareDocumentCommand{\pntrans}{s m m O{} O{} O{}}{\IfBooleanTF{#1}{\node[transition,minimum size=3mm,label={right:\scriptsize #5},label=left:{\scriptsize #3\ #4},#6] (#2) {};}{\node[transition,minimum size=3mm,label={below:\scriptsize #5},label=above:{\scriptsize #3\ #4},#6] (#2) {};}
}

\DeclareDocumentCommand{\pnplace}{s m m O{} O{} O{}}{\IfBooleanTF{#1}{\node[place,minimum size=3mm,label={right:\scriptsize #5},label=left:{\scriptsize #3\ #4},#6] (#2) {};}{\node[place,minimum size=3mm,label={below:\scriptsize #5},label=above:{\scriptsize #3\ #4},#6] (#2) {};}
}

\DeclareRobustCommand{\MTLoperator}[4] {\ensuremath{\ifthenelse{\not \equal{#2}{}} {{#1}_{{#2}}^{#3}} {{#1}}
\ifthenelse{\not \equal{#4}{}} {\!\left({#4}\right)} {}
  }}

\DeclareDocumentCommand{\MTLop}{m o o o}{\ensuremath{\IfNoValueTF{#3}{#1}{{#1}_{{#3}}^{{#4}}}\IfNoValueTF{#2}{}{\!\left(#2\right)}\xspace }}

\DeclareDocumentCommand{\Gtl}{o}{\MTLop{\mathsf{G}}[#1]}
\DeclareDocumentCommand{\Ftl}{o}{\MTLop{\mathsf{F}}[#1]}
\newcommand{\Atl}{\MTLop{\mathsf{A}}}
\newcommand{\Etl}{\MTLop{\mathsf{E}}}

\begin{document}
\title{\petrify: Petri-net Model-Checking of \\Concurrency Properties in Java Bytecode}
\title{\petrify: Petri-net Based Analysis of \\Concurrency Properties in Java Bytecode}
\titlerunning{Petri-based Analysis of Bytecode Concurrency}

\ifblind
  \author{Anonymous authors}
\else
\author{Akshatha Shenoy\inst{1}\orcidID{0009-0004-2439-1656} \and
Carlo A.\ Furia \inst{1}\orcidID{0000-0003-1040-3201}}
\institute{Software Institute, USI Università della Svizzera italiana, Lugano, Switzerland \\
  \email{shenoa@usi.ch} $\quad$ \url{bugcounting.net}
}
\fi

\maketitle

\begin{abstract}
  The landscape of automated formal verification 
  is populated by techniques that make prominently different trade-offs:
  some focus on expressiveness and precision, supporting the verification
  of complex properties;
  others favor scalability and practicality, so that they are applicable
  to larger programs using different features.
  This paper presents \tech, a novel automated verification technique
  for concurrency properties that achieves a distinctive trade-off.
  \tech encodes the semantics of Java bytecode programs
  into Petri nets (PNs), which can be analyzed by state-of-the-art
  model checking tools such as LoLA.
  As our experiments demonstrate, 
  \tech's approach offers an interesting combination of expressiveness
  and practicality:
  PNs are a fairly precise encoding of the concurrent behavior of programs;
  at the same time, \tech's PN encoding is succinct,
  so that its analysis remains quite insensitive to parameter size.
  Another practical benefit of targeting bytecode is that
  \tool, the prototype tool that implements the \tech technique,
  is applicable to programs written in any version of Java and even a subset of Kotlin (another language that compiles to Java bytecode) while other similar tools are limited to older versions of Java.
  While this paper's experiments focus on analyzing fundamental 
  properties like deadlock,
  \tech's approach lends itself to be extended to other
  kinds of concurrency analysis, which we plan to tackle in future work.
\end{abstract}

\counterwithout{lstlisting}{chapter}

\section{Introduction}

Every automated formal program analysis technique
has to contend with the expressiveness vs. scalability trade-off.
Techniques that favor expressiveness
support precisely verifying all sorts of user-defined properties,
but may struggle to analyze large programs.
In contrast,
techniques that target scalability are based on bespoke abstractions
that approximate only specific hardcoded properties,
but are applicable to
realistic-size programs.
In the domain of concurrent verification, for example,
software model checkers (e.g., Java Pathfinder~\cite{JPF})
favor expressiveness as they support temporal logic specifications
and other kinds of program annotations,
and can analyze path conditions precisely.
In contrast,
custom static analyzers that focus on a single property
(e.g., data races for Infer’s RacerD~\cite{RacerD}
or deadlocks for JaDA~\cite{JaDA})
are usually more efficient but cannot verify other kinds of properties
and may suffer from imprecision.
Another important dimension of practicality for any formal verification tool
is \emph{language support}.
Most analysis techniques work at the source-code level, which entails
that they may struggle to keep up with the evolution of modern languages.
Consider, again, the example of Java: tools like Pathfinder and JaDA
were developed for earlier versions of Java, and hence cannot analyze programs that
include recently introduced features---even if the features themselves do not affect the concurrent behavior.
All the more so,
applying a verification technique to work on a (subset of a) different language (even one that is similar to Java) is usually onerous---in terms of both adapting the technique and developing a suitable implementation toolchain.

This paper presents \tech,
a program analysis technique for concurrent Java programs
that explores a novel trade-off between expressiveness, scalability, and language support.
\tech relies on two key ideas.
First,
it encodes the concurrent behavior of a Java program as a Petri net.
Thanks to recent advances in their algorithmic verification,
PNs have become an appealing abstract model of concurrent computation,
which combines a high expressiveness with highly optimized model-checking tools. Second, \tech
translates directly from Java bytecode,
rather than working at the source code level.
This makes it a technique that is not tied too closely
to a specific version of Java,
and even works for programs written in (an imperative subset of) Kotlin---another language that compiles to Java bytecode---which provides an additional element of flexibility and practicality.

In our experiments to demonstrate \tech's capabilities,
we implemented it in a tool called \tool,
and applied it to detect deadlocks
in 36 Java and 3 Kotlin programs taken or adapted from various benchmarks.
Although our prototype implementation of \tool
currently only supports a limited number of properties out of the box,
the experiments demonstrated some of its strengths,
which complement
other state-of-the-art tools such as Java Pathfinder and JaDA.
In particular,
\tool is applicable to
Java programs written in any version of the language,
whereas other tools usually cannot process versions more recent than Java~11.
In addition, \tech's abstractions are insensitive
to the parameter size, so that \tool scales up to examples
that heavier, more precise tools cannot handle.

\paragraph{Contributions.}
The paper makes the following main contributions:
\begin{enumerate*}
\item \tech: a novel technique to analyze concurrency properties
  of bytecode programs based on a flow- and context-sensitive, path-insensitive
  encoding of bytecode programs into Petri nets.
\item \tool: an implementation of \tech based on the Soot static analyzer
  and the LoLA Petri net model checker.
\item An experimental evaluation of \tool on 39 programs.
  \ifblind\else
  \item The implementation of \tool and the experimental artifacts are available~\cite{replication}.
  \fi
\end{enumerate*}
For lack of space, some technical details and examples
have been moved to the appendix. The main text
remains self-contained, while focusing on key high-level details.

\begin{figure}[!tb]
  \begin{subfigure}{1.0\linewidth}
\begin{lstlisting}[language=JavaRecent,style=displayed-nln,numbers=left]
void main() { (*\label{l:td:main}*)
    Object lock1 = new Object();
    Object lock2 = new Object();
    Runnable r_12 = () -> { synchronized (lock1) { synchronized (lock2) {}}};
    Runnable r_21 = () -> { synchronized (lock2) { synchronized (lock1) {}}};
    Thread t_12 = new Thread(lock_12);  (*\label{l:td:write:thread_12}*)
    Thread t_21 = new Thread(lock_21);  (*\label{l:td:write:thread_21}*)
    t_12.start(); (*\label{l:td:fork:thread_12}*)
    t_21.start(); (*\label{l:td:fork:thread_21}*)
}
\end{lstlisting}
  \caption{A two-threaded Java program that may deadlock.}
  \label{fig:ex:java}
  \end{subfigure}
  \\
  \begin{subfigure}{1.0\linewidth}
  \begin{subfigure}{0.35\linewidth}
    \scriptsize
    \begin{tabular}{rl}
                             & \RB{def main} \\
      $\RB{entry}_{\RB{main}}\colon$ & \RB{begin} \\
      $\ell_0\colon$ & \RB{write lock1} \\
      $\ell_1\colon$ & \RB{write lock2} \\
$\ell_2\colon$ & \RB{write t_12} \\
$\ell_3\colon$ & \RB{write t_21} \\
      $\ell_4\colon$ & \RB{fork r_12 t_12} \\
      $\ell_5\colon$ & \RB{fork r_21 t_21} \\
      $\RB{exit}_{\RB{main}}\colon$ & \RB{end}
    \end{tabular}
  \end{subfigure}
  \begin{subfigure}{0.32\linewidth}
    \scriptsize
    \begin{tabular}{rl}
                             & \RB{def r_12} \\
      $\RB{entry}_{\RB{r_12}}\colon$ & \RB{begin} \\
      $\ell_6\colon$ & \RB{acquire lock1} \\
      $\ell_7\colon$ & \RB{acquire lock2} \\
      $\ell_8\colon$ & \RB{release lock2} \\
      $\ell_9\colon$ & \RB{release lock1} \\
      $\RB{exit}_{\RB{r_12}}\colon$ & \RB{end}
    \end{tabular}
  \end{subfigure}
  \begin{subfigure}{0.32\linewidth}
    \scriptsize
    \begin{tabular}{rl}
                             & \RB{def r_21} \\
      $\RB{entry}_{\RB{r_21}}\colon$ & \RB{begin} \\
      $\ell_{10}\colon$ & \RB{acquire lock2} \\
      $\ell_{11}\colon$ & \RB{acquire lock1} \\
      $\ell_{12}\colon$ & \RB{release lock1} \\
      $\ell_{13}\colon$ & \RB{release lock2} \\
      $\RB{exit}_{\RB{r_21}}\colon$ & \RB{end}
    \end{tabular}
  \end{subfigure}
  \caption{\tech's translation of \autoref{fig:ex:java}'s program into the \rb intermediate language.}
  \label{fig:ex:rb}
\end{subfigure}  
\\[2mm]
\begin{subfigure}{0.94\linewidth}
  \centering
\begin{tikzpicture}[-latex]
  \matrix (main) [row sep=6mm,column sep=1.8mm,nodes={anchor=north}] {
    \pnplace{p-m-e}{\RB{m}}[$t_0$][\RB{entry}]
    &
    \pntrans{t-m-e}{\RB{m}}[$t_0$][\RB{entry}]
    &
    \pnplace{p-m-0}{}[][$\ell_0$]
    &
    \pntrans{t-m-0}{}[][$\ell_0$]
    &
    \pnplace{p-m-1}{}[][$\ell_1$]
    &
    \pntrans{t-m-1}{}[][$\ell_1$]
    &
    \pnplace{p-m-2}{}[][$\ell_2$]
    &
    \pntrans{t-m-2}{}[][$\ell_2$]
    &
    \pnplace{p-m-3}{}[][$\ell_3$]
    &
    \pntrans{t-m-3}{}[][$\ell_3$]
    &
    \pnplace{p-m-4}{}[][$\ell_4$]
    &
    \pntrans{t-m-4}{}[][$\ell_4$]
    &
    \pnplace{p-m-5}{}[][$\ell_5$]
    &
    \pntrans{t-m-5}{}[][$\ell_5$]
    &
    \pnplace{p-m-x}{\RB{m}}[$t_0$][\RB{exit}][]
    &
    \pntrans{t-m-x}{\RB{m}}[$t_0$][\RB{exit}]
    \\
  };
  \matrix (l12) [row sep=6mm,column sep=1.8mm,above=3mm of main.center,nodes={anchor=south}] {
    \pntrans{t-12-x}{\RB{t_12}}[][\RB{exit}]
    &
    \pnplace{p-12-x}{\RB{t_12}}[][\RB{exit}]
    &
    \pntrans{t-12-0-rel}{}[][$\ell_9$][node arrows={
      arr l1 t1={<->}, arr l1 t2={->}
      }]
    &
    \pnplace{p-12-0-rel}{}[][$\ell_9$]
    &
    \pntrans{t-12-1-rel}{}[][$\ell_8$][node arrows={
      arr l2 t1={<->}, arr l2 t2={->}
      }]
    &
    \pnplace{p-12-1-rel}{}[][$\ell_8$]
    &
    \pntrans{t-12-1}{}[][$\ell_7$][node arrows={
      arr l2 t1={<->}, arr l2 t2={<-}
      }]
    &
    \pnplace{p-12-1}{}[][$\ell_7$]
    &
    \pntrans{t-12-0}{}[][$\ell_6$][node arrows={
      arr l1 t1={<->}, arr l1 t2={<-}
      }]
    &
    \pnplace{p-12-0}{}[][$\ell_6$]
    &
    \pntrans{t-12-e}{\RB{t_12}}[][\RB{entry}]
    &
    \pnplace{p-12-e}{\RB{t_12}}[][\RB{entry}]
    \\
  };
  \matrix (l21) [row sep=6mm,column sep=1.8mm,below=4mm of main.center,nodes={anchor=south}] {
    \pntrans{t-21-x}{\RB{t_21}}[][\RB{exit}]
    &
    \pnplace{p-21-x}{\RB{t_21}}[][\RB{exit}]
    &
    \pntrans{t-21-0-rel}{}[][$\ell_{13}$][node arrows={
      arr l2 t1={->}, arr l2 t2={<->}
      }]
    &
    \pnplace{p-21-0-rel}{}[][$\ell_{13}$]
    &
    \pntrans{t-21-1-rel}{}[][$\ell_{12}$][node arrows={
      arr l1 t1={->}, arr l1 t2={<->}
      }]
    &
    \pnplace{p-21-1-rel}{}[][$\ell_{12}$]
    &
    \pntrans{t-21-1}{}[][$\ell_{11}$][node arrows={
      arr l1 t1={<-}, arr l1 t2={<->}
      }]
    &
    \pnplace{p-21-1}{}[][$\ell_{11}$]
    &
    \pntrans{t-21-0}{}[][$\ell_{10}$][node arrows={
      arr l2 t1={<-}, arr l2 t2={<->}
      }]
    &
    \pnplace{p-21-0}{}[][$\ell_{10}$]
    &
    \pntrans{t-21-e}{\RB{t_21}}[][\RB{entry}]
    &
    \pnplace{p-21-e}{\RB{t_21}}[][\RB{entry}]
    \\
  };

  \foreach \x/\y in {p-m-e/t-m-e,t-m-e/p-m-0,p-m-0/t-m-0,t-m-0/p-m-1,p-m-1/t-m-1,t-m-1/p-m-2,p-m-2/t-m-2,t-m-2/p-m-3,p-m-3/t-m-3,t-m-3/p-m-4,p-m-4/t-m-4,t-m-4/p-m-5,p-m-5/t-m-5,t-m-5/p-m-x,p-m-x/t-m-x,p-12-e/t-12-e,t-12-e/p-12-0,p-12-0/t-12-0,t-12-0/p-12-1,p-12-1/t-12-1,t-12-1/p-12-1-rel,p-12-1-rel/t-12-1-rel,t-12-1-rel/p-12-0-rel,p-12-0-rel/t-12-0-rel,t-12-0-rel/p-12-x,p-12-x/t-12-x,p-21-e/t-21-e,t-21-e/p-21-0,p-21-0/t-21-0,t-21-0/p-21-1,p-21-1/t-21-1,t-21-1/p-21-1-rel,p-21-1-rel/t-21-1-rel,t-21-1-rel/p-21-0-rel,p-21-0-rel/t-21-0-rel,t-21-0-rel/p-21-x,p-21-x/t-21-x} {\draw (\x) to (\y);}
  \draw (t-m-4) to (p-12-e);
  \draw (t-m-5) to (p-21-e);

  \pnplace{l-12-1}{}[\RB{t_12}][\RB{lock1}][above right=3mm and 5mm of t-m-x,tokens=1,fill=l1t1col]
  \pnplace{l-21-1}{}[\RB{t_21}][\RB{lock1}][below right=3mm and 5mm of t-m-x,tokens=1,fill=l1t2col]
  \pnplace{l-12-2}{}[\RB{t_12}][\RB{lock2}][above left=3mm and 5mm of p-m-e,tokens=1,fill=l2t1col]
  \pnplace{l-21-2}{}[\RB{t_21}][\RB{lock2}][below left=3mm and 5mm of p-m-e,tokens=1,fill=l2t2col]

  \node at (p-m-e) {$\bullet$};
  
\end{tikzpicture}
\caption{\tech's encoding of \autoref{fig:ex:rb}'s \rb program as a Petri net.
  To reduce clutter, the arrows connecting transitions $\ell_{6\text{--}9}$
  and $\ell_{10\text{--}13}$
  to the four places \inlineplace{l2t1col}, \inlineplace{l2t2col}, \inlineplace{l1t1col}, \inlineplace{l1t2col} are shortened and rely on colors
  to indicate the connected place.
}
  \label{fig:ex:pn}
  \end{subfigure}
  \caption{A simple concurrent Java program,
    and \tech's encoding in \rb and as a Petri net.}
  \label{fig:motivating-example}
\end{figure}

\section{An Overview of \tech}
\label{sec:example}

\autoref{fig:ex:java} shows a simple concurrent Java program with two threads
\J{thread_12} and \J{thread_21} that try to acquire a lock on variables \J{lock1} and \J{lock2}.
It is easy to see that the program will deadlock
if the two threads \emph{interleave} their lock acquisition operations,
leading to a state where \J{t_12} has a lock on \J{lock1},
\J{t_21} has a lock on \J{lock2},
and each thread waits for the other thread to release the lock it is holding.
Let's describe how \tech analyzes this program to find the deadlock scenario.

\tech works on \emph{bytecode} (produced by the Java compiler),
rather than on source Java code.
One advantage of targeting bytecode is that we can support any Java version, since new language features
(e.g., \autoref{fig:ex:java}'s instance main method and unnamed class,
which were introduced in Java~21 and still are preview features)
are
desugared into a stable set of bytecode instructions by the compiler.

\tech first translates the input bytecode program into the \rb program
shown in \autoref{fig:ex:rb}.
\rb is an intermediate language that is flow- and context-sensitive
but path-insensitive;
hence, it represents an approximation\iflong\footnote{
  As we explain in \autoref{sec:limitations},
  \tech's \rb translation is an overapproximation under certain conditions:
  the input Java program does not use features that are currently not supported,
  and the results of an alias analysis are sufficiently precise.
}
\fi
of the bytecode program's
executions.
We introduced \rb to
reduce the semantic gap between bytecode and Petri nets,
which simplifies the design of the overall translation 
and also reasoning about its correctness.
In this example, \rb captures all aspects
of concurrent behavior without information loss.

Then, \tech encodes the semantics of \autoref{fig:ex:rb}'s \rb program into the Petri net (PN) in \autoref{fig:ex:pn}.
\tech supports the PN format used by state-of-the-art analyzers such
as LoLA~\cite{LoLA}, so that its output can be fed to these tools to analyze
any properties of interest.
To check the presence of \emph{deadlocks},
\tech augments the output PN with additional ``monitoring'' places and transitions
(now shown in \autoref{fig:ex:pn} for simplicity)
and a suitable temporal logic formula.
With this input, LoLA quickly finds a PN execution that corresponds to \autoref{fig:ex:java}'s deadlock.

As you can glean from \autoref{fig:ex:pn},
the structure of the PN is clearly modular.
The central chain of nodes
(circular ``places'' and square ``transition'' in PN terminology)
encodes the control flow of the \J{main} function,
whereas the top and bottom chains correspond to the anonymous methods (lambdas) executed by threads \J{t_12} and \J{t_21}
respectively.
The arrows connecting transitions $\ell_4$ and $\ell_5$
to the entry places of \J{t_12} and \J{t_21}
represent the starting points of the threads parallel to \J{main}.
The modular structure of \tech's PN encoding
also helps map back an error trace given by the PN analyzer
to an execution of the original Java program.

\section{Related Work}
\label{sec:related-work}

There is a vast amount of research on
detecting \emph{concurrency} programming errors.
For space constraints,
we focus on techniques and tools that are applicable to Java
and are currently available;
as for any kind of program analysis,
they can be broadly classified in
static or dynamic, and according to their expressiveness vs.\ scalability.

\nicepar{Static techniques\,}
are usually sound but imprecise.
\emph{Deductive} verification offers high expressiveness,
as it can analyze user-defined complex properties;
however, it requires a significant amount of human effort, as
programs must be annotated with detailed
formal specifications and additional assertions
such as invariants.
VerCors~\cite{VerCors} and VeriFast~\cite{VeriFast}
are two prominent examples of deductive verifiers for Java,
both supporting annotations written in a combination of JML~\cite{JML}
and separation logic.
VerCors offers a higher degree of automation, as it
relies on the Viper intermediate verifier~\cite{Viper} to discharge verification conditions,
while VeriFast supports interactive correctness proofs.

\nicepar{Model checking\;}
is a popular verification technique
based on analyzing (finite-)state models
against temporal-logic properties.
Tools like JPF (Java Pathfinder)~\cite{JPF}, JayHorn~\cite{JayHorn},
JBMC~\cite{JBMC}, and JMC~\cite{JMC}
are \emph{software} model checkers:
they encode the semantics of a Java program using a 
state model that can be analyzed by a model checker such as Spin~\cite{Spin}.
This provides a high degree of automation, while still supporting
user-defined properties (usually expressible in temporal logic).
\tech follows a similar approach,
but it leverages Petri nets (instead of less expressive automata models)
to naturally model several aspects of a program's concurrent behavior.

A key issue when designing an analysis based on model checking
is that the full semantics of a (Java) program is generally infinite-state.
Some tools (e.g., JPF, JBMC, JMC)
perform a \emph{bounded} exploration
of the infinite program state; hence, they are powerful testing tools, but
are not sound in general.
JPF, in particular, is a mature analysis framework based on a
custom controllable version of the Java virtual machine;
it can systematically or randomly explore execution paths
and different thread interleavings.
Symbolic JPF (SPF) performs the state exploration \emph{symbolically}
(using a form of symbolic execution),
so that it is systematic and satisfies complex coverage criteria.

Other techniques build
a finitely-analyzable \emph{over-approximation} of a program's state space,
which loses precision but retains soundness.
Techniques like Infer's RacerD~\cite{Infer,RacerD}
(based on separation logic and bi-abduction),
Checkmate~\cite{Checkmate} (based on abstract interpretation)
and Chord~\cite{Chord}
(an unsound, precise technique based on context-sensitive analyses)
are all different applications of static analysis to detect data races.
JaDA~\cite{JaDA} offers deadlock detection for (a subset of) Java bytecode;
it uses typing rules to define an infinite-state abstract model of
the program's lock dependencies;
the model is analyzable by means of a fixpoint decision algorithm.
\tech also builds, by means of a dataflow analysis,
an approximation of a Java program's executions,
which it encodes as a finitely-analyzable Petri net.

\nicepar{Petri nets\;}
are a classic model of concurrency.
To our knowledge, they have been historically mainly used to build abstract,
high-level models, to encode the semantics
of core concurrency properties
(e.g., causal atomicity~\cite{CausalAtomicity})
and primitives
(e.g., synchronization through conditional variables~\cite{KTH}).
In our work,
we leverage the recent advances in model-checking tools for Petri nets~\cite{Tina,LoLA,MCC}
and use them as back-end of \tool.

\nicepar{Dynamic analysis\,} 
has become more popular in recent years to detect concurrency bugs
such as deadlocks~\cite{DeadlockPrediction},
data races~\cite{MathurRace,Kulkarni21}, and linearizability~\cite{lincheck,MathurLinearizability} and atomicity~\cite{MathurAtomicity,RegionTrackAtomicity} violations.
While such techniques do not offer soundness,
they are practical (i.e., since they are based on executing a program, they support all language features) and scalable (e.g., they analyze a trace in linear time).

\nicepar{Soundness vs.\ precision.}
On paper, static and dynamic analysis offer complementary
advantages and disadvantages: static techniques overapproximate program behavior,
and hence their results are sound (exhaustive) but imprecise;
dynamic techniques underapproximate program behavior,
and hence their results are precise (no false alarms) but unsound.
In practice, the boundary between soundness and precision is somewhat fuzzy,
and even static techniques are very often unsound in certain cases~\cite{soundiness},
because they may not fully support certain language features due to practical concerns.
As we will discuss in \autoref{sec:approach}, \tech follows a similar approach of
being ``mostly'' sound and ``reasonably'' precise:
\begin{enumerate*}
\item \tech's implementation does not (fully) support certain Java language features; thus,
the analysis results may be unsound for programs that use those features.
\item \tech relies on an alias analysis to identify possibly shared lock and thread variables;
  when the alias analysis results are insufficiently precise,
  \tech's own analysis may in turn become unsound or imprecise.
\end{enumerate*}
Ultimately, the practicality of \tech in analyzing concurrent programs
is established with \autoref{sec:experiments}'s experimental evaluation,
which highlights its capabilities and limitations, also in comparison
to other similar concurrency analysis tools.

\section{Preliminaries: Jimple Bytecode, \rb, and Petri nets}

\tech encodes the concurrent dataflow semantics of a JVM bytecode program
as a Petri net.
This section introduces
the intermediate representations used by \tech.

\begin{figure}[!tb]
  \centering\small
\begin{adjustwidth}{-12mm}{0mm}
  \begin{subfigure}{0.68\textwidth}
    \centering
  \begin{alignat*}{2}
    \nt{Jimple} & ::=\ \nt{Method}^* \\
    \nt{Method} & ::=\ \Ji{method}\ \nt{Id} \colon \nt{Type}^* \to \nt{Type}\ \{ \nt{Instr}^* \}\\
    \nt{Instr} & ::=\
                       \Ji{noop}
                       \ \mid\ \Ji{goto}\:\nt{Label}
                       \ \mid\ \Ji{if}\:\nt{Var}\;\nt{Label} \ \mid \\
                & \qquad
                  \Ji{switch}\:\nt{Var}\;(\nt{Var}\colon\nt{Label})^*
                  \ \mid \\
                & \qquad
                  \nt{Var}\:\Ji{:=}\:\nt{Expr}
                  \ \mid\ \Ji{return} \;\nt{Var}
                  \ \mid\ \Ji{return} \ \mid \\
                & \qquad
                  \Ji{monitor_enter}\;\nt{Var} \ \mid\ \Ji{monitor_exit}\;\nt{Var} \\
    \nt{Expr} & ::=\ 
                \nt{Var}\, \oplus \nt{Var}
                \ \mid\ \odot \nt{Var}
                \ \mid\ \nt{Const}
                \ \mid\ \nt{Call} \\
    \nt{Call} & ::=\
                \Ji{invoke}\:\nt{Id}\:\nt{Var}^*
  \end{alignat*}
  \caption{Syntax of Jimple.}
  \label{fig:syntax-jimple}
\end{subfigure}
\begin{subfigure}{0.3\textwidth}
    \centering
  \begin{alignat*}{2}
    \nt{RB} & ::=\ \nt{Proc}^* \\
    \nt{Proc} & ::=\
                     \RB{def}\;\nt{Id}\;\RB{begin}\:\nt{Cmd}^*\:\RB{end}\\
    \nt{Cmd} & ::=\
               \RB{skip} \ \mid\ \\
    & \qquad
      \RB{goto}\:\nt{Label}
      \ \mid\ \RB{jump}\:\nt{Label}\:\nt{Label}\ \mid\ \\
            & \qquad
              \RB{read}\:\nt{Var}
              \ \mid\ \RB{write} \:\nt{Var}\ \mid\ \\
            & \qquad
              \RB{acquire}\;\nt{Var} \ \mid\ \RB{release}\;\nt{Var}\ \mid\ \\
            & \qquad
              \RB{fork}\;\nt{Id}\:\nt{Thread}\ \mid\ \RB{join}\;\nt{Thread}\ \mid\ \\
            & \qquad
              \RB{call}\;\nt{Id}
  \end{alignat*}
  \caption{Syntax of \rb.}
  \label{fig:syntax-rb}
\end{subfigure}
\end{adjustwidth}
\caption{Syntax of the intermediate representations used by \petrify.}
  \label{fig:syntax}
\end{figure}

\subsection{The Jimple Bytecode Representation}
\label{sec:jimple}

  Rather than working directly with JVM bytecode,
\petrify uses Soot's Jimple bytecode representation:
a typed bytecode form
that abstracts several low-level details and incorporates static information.
\autoref{fig:syntax-jimple} outlines the main constructs of Jimple,
using a simplified abstract syntax.
  Before going through it,
  we point out that
Jimple uses an SSA (static single assignment) form;
thus, any complex expression $((v_0 \oplus_1 v_1) \oplus_2 v_2) \cdots \oplus_n v_n)$
becomes a sequence of assignments to local variables
$r_1 := v_0 \oplus_1 v_1$, $r_2 := r_1 \oplus_2 v_2$, \ldots, $r_n := r_{n-1} \oplus_n v_n$, such that $r_n$ stores the value of the whole expression.
Thus, Jimple instructions generally only take \emph{variables} (not expressions)
as arguments.

A \nt{Jimple} program is a collection of methods,
with at least one \Ji{main} method
that corresponds to the program's entry point.
A \nt{Method} has a name, a typed signature, and a body
consisting of a sequence of instructions,
which include:
\begin{enumerate*}
\item \emph{Control flow} instructions:
  unconditional (\Ji{goto}) and conditional (\Ji{if}, \Ji{switch})
  jumps to a location with a given label;
  \Ji{return} to the caller.
\item \emph{Assignment} instructions
  perform any kind of expression evaluation using the SSA form discussed above.
\item \emph{Monitor} instructions
  correspond to the synchronization
  when entering (\Ji{monitor_enter}) and exiting a \J{synchronized} block.
\item \emph{Call} instructions
  are also only used in the right-hand side
  of an assignment.
  For simplicity, \Ji{invoke} represents all five variants of
  call instructions (\Bc{static}, \Bc{virtual}, etc.)
  available in bytecode.
\item The \Ji{noop} instruction does nothing.
\end{enumerate*}

\begin{expara}
  \autoref{fig:ex:java}'s example in Jimple consists of three methods:
  \Ji{main},
  and two anonymous methods for the \J{Runnable} objects.
  Each \J{synchronized} block is a pair of matching
  \Ji{monitor_enter} and \Ji{monitor_exit} instructions.
  The other statements are different variants of \Ji{invoke}:
  \Ji{invokespecial} for the \J{new} creation expressions,
  and \Ji{invokevirtual} for the \J{start()} calls.
\end{expara}

\subsection{The \rb Intermediate Language}
\label{sec:rb}

To streamline the encoding of concurrent behavior
into PNs,
we introduce the \rb intermediate language.
In a nutshell, \rb (short for Rock Bottom)
is a simplified bytecode-like representation
that captures flow- and context-sensitive information
while abstracting away path-sensitive details.

\paragraph{Syntax of \rb.}
\autoref{fig:syntax-rb} outlines the syntax of \rb.
An \nt{RB} program is a collection of procedures;
like in Jimple (and Java)
we assume that at least one \Ji{main} procedure exists.
A \nt{Proc} has a unique name, and a body
consisting of a sequence of commands
marked by \RB{begin} and \RB{end}.
We also assume that each procedure $p$ also defines
labels $\RB{entry}_p$ and $\RB{exit}_p$ marking, respectively, $p$'s
unique entry and exit points.
\rb commands include:
\begin{enumerate*}
\item \emph{Control flow} commands:
  unconditional (\RB{goto}) and nondeterministic (\RB{jump})
  jumps;
  synchronously \RB{call} a procedure $p$;
  spawn (\RB{fork}) and wait for (\RB{join})
  a parallel thread $t$ running a procedure $p$.
  We assume that the thread identifiers in \RB{fork} and \RB{join} commands
  are distinct from all other identifiers,
  and that the \RB{main} procedure runs on thread $t_0$.
\item \emph{Synchronization} commands
  to \RB{acquire} and \RB{release} a lock.
\item \emph{Access} commands
  to \RB{read} and \RB{write} a variable.
\item The \RB{skip} command does nothing.
\end{enumerate*}
Given an \rb program \nt{RB},
$T$ denotes the set of all \nt{Thread} identifiers,
$P$ the set of all \nt{Procedure} identifiers,
$B$ the set of all \nt{Label}s
(we assume that each command has a unique label),
and
$V$ the set of all \nt{Variable} names
mentioned anywhere in the program.

\begin{figure}[t]
  \centering
  \begin{adjustwidth}{-10mm}{-3mm}
  \begin{align*}
    \inferrule{
    s = (t, p, \ell\colon c, L, R, \running) \in S \\\\
    c \in \{ \RB{skip}, \RB{read}\: v, \RB{write}\: v\}
    }{
    S' = S \setminus \{s\} \cup \{(t, p, \ell + 1, L, R, \running) \}
    }
    &&
    \inferrule{
    s = (t, p, \ell\colon \RB{call}\:p', L, R, \running) \in S
    }{
    S' = S \setminus \{s\} \cup \{(t, p', \RB{entry}_{p'}, L, R + [\ell + 1], \running) \}
    }
    \\
    \inferrule{
    s = (t, p, \RB{exit}_p, L, R + [\ell], \running) \in S
    \\ \ell \in p'
    }{
    S' = S \setminus \{s\} \cup \{(t, p', \ell, L, R, \running) \}
    }
    &&
    \inferrule{
    s = (t, p, \RB{exit}_p, L, \emptyset, \running) \in S
    }{
    S' = S \setminus \{s\} \cup \{(t, p, \RB{exit}_p, L, \emptyset, \terminated) \}
    }
    \\
    \inferrule{
    s = (t, p, \ell\colon \RB{goto}\:\ell', L, R, \running) \in S
    }{
    S' = S \setminus \{s\} \cup \{(t, p, \ell', L, R, \running) \}
    }
    &&
    \inferrule{
       s = (t, p, \ell\colon \RB{jump}\:\ell_1 \ell_2, L, R, \running) \in S \quad
      \ell' \in \{ \ell_1, \ell_2 \}
    }{
    S' = S \setminus \{s\} \cup \{(t, p, \ell', L, R, \running) \}
    }
    \\
    \inferrule{
    s = (t, p, \ell\colon \RB{acquire}\:k, L, R, \running) \in S
    \\ k \not\in L
    \\\\
    \forall t', L' \cdot (t', \_, \_, L', \_, \_) \in S \land t' \neq t \rightarrow k \not\in L'
    }{
    S' = S \setminus \{s\} \cup \{(t, p, \ell + 1, L \cup \{k\}, R, \running) \}
    }
    &&
    \inferrule{
       s = (t, p, \ell\colon \RB{release}\:k, L, R, \running) \in S
       \\ k \in L
    }{
    S' = S \setminus \{s\} \cup \{(t, p, \ell + 1, L \setminus \{k\}, R, \running) \}
    }
    \\
    \inferrule{
       s = (t, p, \ell\colon \RB{fork}\:p'\:t', L, R, \running) \in S
}{
    S' = S \setminus \{s\} \cup \{(t, p, \ell + 1, L, R, \running) \}\\\\
    \qquad\qquad\qquad\quad\cup\ \{(t', p', \RB{entry}_{p'}, \emptyset, \emptyset, \running) \}
    }
    &&
    \inferrule{
       s = (t, p, \ell\colon \RB{join}\:t', L, R, \running) \in S
\\\\
    s' = (t', p', \ell', L', R', \terminated) \in S
    }{
    S' = S \setminus \{s, s'\} \cup \{(t, p, \ell + 1, L, R, \running) \}
    }
  \end{align*}
\end{adjustwidth}
\caption{Operational semantics of \rb.
    Each rule describes one step of evaluation $S \leadsto S'$ when
    executing the command in the rule's premise.}
  \label{fig:rb-semantics}
\end{figure}

\paragraph{Semantics of \rb.}
The \emph{state} $S$ of an \rb program
is a set of tuples $(t, p, \ell, K, R, \tau)$,
where $t \in T$ is a thread,
$p \in P$ is a procedure,
$\ell \in B$ is a command label,
$K \subseteq V$ is a set of locked variables,
$R \in B^*$ is a sequence of procedure/label pairs,
and $\tau \in \{\running, \terminated\}$ is the thread's termination state.
Informally,
such a tuple denotes a thread $t$ is ready to run command at label $\ell$
in procedure $p$, while holding locks $K$;
$\tau$ denotes whether $t$
is still running (\running) or has terminated (\terminated);
and $R$ is a stack of return locations of pending calls.
The \emph{initial state} of an \rb program
is $\{(t_0, \RB{main}, \RB{entry}_{\RB{main}}, \emptyset, \emptyset, \running)\}$.

\autoref{fig:rb-semantics} outlines an operational semantics of \rb.
Each rule shows how the state $S$ changes after executing a different command.
Command \RB{skip}
simply continues execution of the thread $t$ to the next location $\ell + 1$.
Since \rb abstracts away path-sensitive details,
\RB{read} and \RB{write} behave exactly like \RB{skip}.
However, we still include them as separate commands since they allow us
to model, in \rb, the different interleavings of read and write operations
in a JVM program, and behaviors that depend on them---such as data races.
A $\RB{call}\: p'$ command appends the return label $\ell + 1$ to $R$ and continues executing the callee $p'$ from its entry point.
Conversely, when a call terminates, the most recent return label is removed from $R$ and used as next command to execute.
If the execution of a procedure $p$ reaches its exit point and $R$ is empty,
this makes the whole thread $t$ terminate---denoted by $\tau = \terminated$.
Both kinds of branching commands, \RB{goto} and \RB{jump},
change the current command label: \RB{goto} does so unconditionally,
whereas \RB{jump} nondeterministically picks one of two possible labels.
A thread $t$ can execute an $\RB{acquire}\:k$ only if
any other thread $t'$ is \emph{not} holding a lock on $k$.
In contrast, $t$ can execute a $\RB{release}\:k$ without waiting,
as long as it is currently holding a lock on $k$.
Finally, a $\RB{fork}\:p'\:t'$
adds a new tuple $(t', p', \RB{entry}_{p'}, \emptyset, \emptyset, \running)$
to the state $S$, corresponding to a new thread $t'$ starting to execute procedure $p'$.
And a $\RB{join}\:t'$ can execute only when thread $t'$
has terminated.

\begin{expara}
When \autoref{fig:ex:rb}'s example \rb program
deadlocks, its state $S$ consists of the tuples:
{\small $(\RB{t_12}, \RB{r_12}, \ell_7, \{\RB{lock1}\}, [], \running)$},
{\small $(\RB{t_21}, \RB{r_21}, \ell_{11}, \{\RB{lock2}\}, [], \running)$},
{\small $(t_0, \RB{main}, \RB{exit}_{\RB{main}}, \emptyset, [], \terminated)$}.
\linebreak
Neither $\ell_7$ nor $\ell_{11}$ can execute,
because each thread state invalidates the other's progress precondition.
\end{expara}

\subsection{Petri Nets}
\label{sec:petri}

A Petri net is a tuple $(\Pi, \Delta, A, I)$,
where $\Pi$ is a set of \emph{places},
$\Delta$ is a set of \emph{transitions},
$A \subseteq (\Pi \times \Delta) \cup (\Delta \times \Pi)$ is a set of \emph{arcs}
connecting transitions to places and places to transitions,
and $I \colon \Pi \to \naturals$ is an \emph{initial marking}.
The \emph{preset} $\nt{pre}(\delta)$ of a transition $\delta \in \Delta$
is the set of places $\pi$ such that $(\pi, \delta) \in A$;
and the \emph{postset} $\nt{post}(\delta)$
is the set of places $\pi$ such that $(\delta, \pi) \in A$.
It is customary to picture a PN as a graph where
places are circles, transitions are boxes, and arcs are arrows.
\autoref{fig:motivating-example} displays the running example's
PN using this notation; the black disks denote
the net's initial marking.

\paragraph{Semantics of Petri nets.}
The \emph{state} of a PN is a \emph{marking} $m \colon \Pi \to \naturals$,
which denotes how many \emph{tokens} $m(\pi) \geq 0$ each place $\pi \in \Pi$ holds.
A transition $\delta \in \Delta$ is \emph{enabled} in a state $m$
whenever $m(\pi) > 0$ for every place $\pi$ in $\nt{pre}(\delta)$;
in other words, all places that connect to $t$ are marked with at least one token.
Whenever a transition $\delta$ is enabled, it can nondeterministically \emph{fire}.
If $\delta$ fires when the net is in state $m$,
the new marking $m'$ is such that $m'(p) = m(p) - 1$ for every place $p \in \nt{pre}(\delta)$, and
$m'(p) = m(p) + 1$ for every place $p \in \nt{post}(\delta)$.
In other words, each place in $\delta$'s preset consumes one token,
whereas each place in $\delta$'s postset acquires one additional token.
If no transition is enabled in a marking $m$, the PN is in a \emph{deadlock}.

According to these definitions,
a PN determines a set of possible \emph{firing sequences},
corresponding to all sequences of transitions that can be triggered from
the initial marking;
each firing sequence corresponds to a sequence of markings
$m_0 \vdash m_1 \vdash \ldots$ that begins
with $m_0 = I$.
The set of all firing/marking sequences denotes a PN's \emph{semantics}.

\begin{expara}
  From its initial marking,
  \autoref{fig:ex:pn}'s PN may reach the state
  where places
  $(\RB{t_12}, \ell_7)$, $(\RB{t_21}, \ell_{11})$, $(\RB{t_12}, \RB{lock1})$, $(\RB{t_21}, \RB{lock2})$
  are marked.
  From that state, no transition is possible:
  place $(\RB{t_12}, \RB{lock2})$ is unmarked,
  which disables transition $(\RB{t_12}, \ell_{7})$;
  and place $(\RB{t_21}, \RB{lock1})$ is unmarked,
  which disables transition $(\RB{t_21}, \ell_{11})$.
\end{expara}

\begin{figure}[!bt]
\begin{adjustwidth}{-30mm}{-30mm}
\centering
\begin{tikzpicture}[
  databox/.style 2 args={rectangle,very thick,
    rounded corners=2mm,font=\footnotesize#1,
    minimum width=5mm,minimum height=7mm,
    label={[below=20pt]#2}},
  toolbox/.style={rectangle,very thick,font=\footnotesize#1},
  align=center
  ]

  \matrix[row sep=8mm,column sep=5mm] {
\node[databox={}{\texttt{Deadlock.class}},fill=jimplecol,text=white] (bytecode) {bytecode};
    & \node[databox={}{\texttt{no deadlocks}},fill=rbcol,text=white] (prop) {property};
    &&&
    \node[databox={}{\texttt{Deadlock.pn}},fill=pncol,text=white] (petrinet) {Petri net};
    \\
    \node[toolbox,fill=graycol,text=black] (soot) {Soot};
    &
    \node[toolbox={\sffamily},label={[below=12pt]\textcolor{rbcol}{$\translation$}},fill=rbcol,text=white] (analysis) {translation};
    &
    \node[databox={\sffamily}{\texttt{Deadlock.rb}},fill=rbcol,text=white] (rb) {\rb};
    &
    \node[toolbox={\sffamily},label={[below=14pt]\textcolor{rbcol}{$\encoding$}},fill=rbcol,text=white] (encoding) {encoding};
    &
    \node[toolbox,fill=graycol,text=black] (pnanalyzer) {LoLA};
    \\
    \node[databox={}{\texttt{Deadlock.jimple}},fill=jimplecol,text=white] (jimple) {Jimple};
    &&&&
    \node (output-mid) {};
    \\
  };

  \node[left=3pt of output-mid] (ok) {\color{green}{\faCheck}};
  \node[right=3pt of output-mid] (fail) {\color{red}{\faClose}};
  \node[below left=2mm and -5mm of pnanalyzer] (tl) {\scriptsize$\Box(p\!\to\!x)$};

  \node [fit=(analysis)(encoding),draw=petrifycol,
         ultra thick,rounded corners,label={[petrifycol]south:\textbf{\tech}},
         inner ysep=5mm] (petrify) {};

  \begin{scope}[color=petrifycol,line width=1pt]
    \draw (petrify.west) -- (analysis);
    \draw (encoding.east) -- (petrify.east);
    \begin{scope}[round cap-latex']
      \draw (analysis) -- (rb);
      \draw (rb) -- (encoding);
    \end{scope}
  \end{scope}

  \begin{scope}[color=black!80,line width=1pt,round cap-latex',every
    node/.style={font=\footnotesize}]
    \draw ($(bytecode)+(0,-7mm)$) -- (soot);
    \draw (soot) -- (jimple);
    \draw (soot) -- (petrify);
    \draw (jimple.east) -- (petrify.west);
    \draw (petrify.east) -- ($(petrify.east)!0.37!(pnanalyzer.west)$) |- (petrinet.west);
    \draw (petrify.east) -- (tl);
    \draw ($(petrinet)+(0,-7mm)$) -- (pnanalyzer);
    \draw (pnanalyzer) -- ($(pnanalyzer)!0.6!(output-mid)$) -- (ok);
    \draw (pnanalyzer) -- ($(pnanalyzer)!0.6!(output-mid)$) -- (fail);
    \draw ($(tl.north) + (-3pt,0)$) |- (pnanalyzer);
    \draw ($(prop.south) + (0,-3.1mm)$) -- (prop.south |- petrify.north);
  \end{scope}
  
\end{tikzpicture}
\end{adjustwidth}
\caption{An overview of how \tech works.}
\label{fig:workflow}
\end{figure}

\section{How \tech Works}
\label{sec:approach}

This section details \tech's approach.
As outlined in \autoref{fig:workflow},
\tech inputs a program $J$ in bytecode;
  precisely, it targets the Jimple format offered by the Soot static analyzer.
\tech translates $J$ into
a corresponding \rb program $R = \translation(J)$
(\autoref{sec:jimple-to-rb});
it then encodes $R$'s semantics as a Petri net $P = \encoding(R)$
(\autoref{sec:rb-to-pn});
it also produces suitable temporal logic
properties that can be passed as input, together with $P$, to a PN analyzer
(e.g., LoLA) to detect concurrency issues
(\autoref{sec:encoding-props}).

\subsection{Translation of Jimple into \rb}
\label{sec:jimple-to-rb}
\label{sec:rb-soundness-precision}

Given a bytecode program (in Jimple format) $J$,
\tech builds an \rb program $\translation(J) = R$
that captures a flow- and context-sensitive, path-insensitive approximation
of $J$'s behavior.

\subsubsection{Modeling Aliasing.}\label{sec:aliasing}

Consider any identifier $n$ that appears in $J$, including
variable identifiers and method names.
Let $\eta(n)$ denote $n$ expressed in a form that is unambiguous throughout the program:
for example, if $\Ji{o.m}$ denotes a method \Ji{m} of an object \Ji{o} of class \Ji{O},
$\eta(\Ji{o.m})$ denotes the fully qualified name \Ji{O.m}
with a suffix that distinguishes it from other overloaded variants.
Let $\varphi \triangleq \bigcup_{\iota\colon \nt{Var}} \{ \eta(\iota) \}$
denote the set of unique variable identifiers anywhere in $J$---which we'll call
``vars'' for short.
And let $\locs$ denote the set of all locations in $J$.
\tech's translation of bytecode uses 
a \emph{may point to} analysis as follows.
Let $\iota \in \varphi$ be a var
used at location $\ell \in \locs$ in $J$;
$\mu_{\iota, \ell} \subseteq \varphi$ denotes the \emph{set} of
vars that may be aliased to $\iota$
according to the may point to analysis.
The may-alias sets $\mu_{\iota, \ell}$ determine an equivalence relation $\simeq^+\ \subseteq(\varphi \times \locs) \times (\varphi \times \locs)$ among
vars as follows:
first, $\simeq\ \subseteq(\varphi \times \locs) \times (\varphi \times \locs)$ is the reflexive and symmetric relation defined by 
$\iota_1, \ell_1 \simeq \iota_2, \ell_2$ iff $\mu_{\iota_1, \ell_1} \cap \mu_{\iota_2, \ell_2} \neq \emptyset$;
then, $\simeq^+$ is the transitive closure of $\simeq$: $\iota, \ell \simeq^+ \iota', \ell'$ iff
there exist $\iota_1, \ell_1,\ldots,\iota_n, \ell_n$ such that $\iota_1 = \iota$, $\ell_1 = \ell$, $\iota_n = \iota'$, $\ell_n = \ell'$, and
$\iota_k, \ell_k \simeq \iota_{k+1}, \ell_{k+1}$ for all $1 \leq k < n$.
Finally, let $\alpha(\iota, \ell)$ be a unique identifier
that corresponds to the equivalence class of $\varphi$ according to $\simeq^+$
that $\iota, \ell$ belongs to.
Intuitively, $\alpha$ assigns the same identifiers to any two vars in $J$
iff they may be aliased.
\iflong
This approach is parametric with respect to the may point to analysis that is used.
\fi

\subsubsection{\rb Encoding.}\label{sec:rb-encoding}

\autoref{tab:jimple-to-rb}
outlines how \tech translates Jimple statements into \rb commands.
Methods become procedures in \rb, whose bodies get translated statement by statement.
\iflong
The translation of \Ji{noop}, \Ji{goto}, and \Ji{return}
is straightforward.
\fi
Conditional jumps (\Ji{if} and \Ji{switch})
become nondeterministic jumps (\RB{jump}) in \rb.
This makes all control-flow paths in the Jimple program feasible in the \rb program.
Lock acquisition statements (\Ji{monitor_enter} and \Ji{monitor_exit})
become the corresponding commands in \rb (\RB{acquire} and \RB{release}).
Finally, an assignment translates to a translation of its right-hand side expression,
followed by a \RB{write} to its target.
The translation of non-call expressions is straightforward,
as it corresponds to a read of the variables involved in the expression.
An \Ji{invoke} translates to \RB{read}s of its actual arguments
and possibly its target, followed by a \RB{call} to the \rb procedure $\eta(o.m)$ translating the called method $o.m$.
An exception is for calls of methods \Ji{start} and \Ji{join}
on targets of type \Ji{Thread}: these become \RB{fork} and \RB{join}
respectively.

\begin{table}[!tb]
  \centering
  \begin{tabular}{ll}
    \toprule
    \multicolumn{1}{c}{\textsc{Jimple} $s$} & \multicolumn{1}{c}{\rb \textsc{translation} $\translation(s)$}
    \\
    \midrule
    $\Ji{method}\ m \colon t_1 \ldots t_n \to t\ \{ B \}$
    & $\RB{def}\ \eta(m(t_1\ldots t_n))\ \RB{begin}\ \translation(B)\ \RB{end}$
    \\
    $s_1 ; s_2$ & $\translation(s_1) ; \translation(s_2)$
    \\
    $\Ji{noop}$ & $\RB{skip}$
    \\
    $\Ji{goto}\ \ell'$ & $\RB{goto}\ \bar{\ell'}$
    \\
    $\Ji{if}\ v\ \ell'$ & $\RB{read}\ \alpha(v) ; \RB{jump}\ \nextloc\ \bar{\ell'}$
    \\
    $\Ji{switch}\ v\ (v_1 \colon \ell_1) \ldots (v_n \colon \ell_n)$ & $\RB{read}\ \alpha(v) ; \RB{read}\ \alpha(v_1) ; \RB{jump}\nextloc \bar{\ell_1};\ldots;\RB{read}\ \alpha(v_n) ; \RB{jump}\nextloc \bar{\ell_n}$
    \\
    $v\ \Ji{:=}\ e$ & $\translation(e) ; \RB{write}\ v$
    \\
    $\Ji{return}\ v$ & $\RB{read}\ \alpha(v) ; \RB{goto exit}$
    \\
    $\Ji{return}$ & $\RB{goto exit}$
    \\
    $\Ji{monitor_enter}\ v$ & $\RB{acquire}\ \alpha(v)$
    \\
    $\Ji{monitor_exit}\ v$ & $\RB{release}\ \alpha(v)$
    \\
    \cmidrule(lr){1-2}
    $c \colon \nt{Const}$ & --
    \\
    $v_1 \oplus v_2$ & $\RB{read}\ \alpha(v_1) ; \RB{read}\ \alpha(v_1)$
    \\
    $\odot\ v$ & $\RB{read}\ \alpha(v)$
    \\
    $\Ji{invoke}\ m\ o\ a_1\ldots a_n$
    & $\RB{read}\ \alpha(v_1) ;\ldots; \RB{read}\ \alpha(v_n) ; \RB{read}\ \alpha(o) ; \RB{call}\ \eta(o.m)$
    \\                                                          
    $\Ji{invoke}\ \Ji{start}\ t\colon \Ji{Thread}$
                       & $\RB{fork}\ \eta(t.\Ji{start})\ \alpha(t)$
    \\
    $\Ji{invoke}\ \Ji{join}\ t\colon \Ji{Thread}$
                       & $\RB{join}\ \alpha(t)$
    \\
    \bottomrule
  \end{tabular}
  \caption{Translation $\translation$ of Jimple instructions (top) and expressions (bottom) into \rb commands.
    $\bar{\ell}$ denotes the location in the \rb program that translates
    the instruction at location $\ell$ in the Jimple program;
    $\nextloc$ denotes the location of the next command;
    $\eta(n)$ denotes a unique form of identifier $n$;
    and $\alpha(v)$ denotes a group of variables that may be aliased
    (see the text for a precise definition).
  }
  \label{tab:jimple-to-rb}
\end{table}

\subsubsection{Soundness and Precision of the \rb Encoding.}

The translation $R = \translation(J)$ of $J$ is \emph{sound} if the feasible execution paths in $R$ are a \emph{superset}
of those in $J$; conversely, an unsound translation may omit error paths that are possible in the original program.
The translation is \emph{precise} if the feasible execution paths in $R$ are a \emph{subset}
of those in $J$; conversely, an imprecise translation may include spurious error paths that are impossible in the original program.
\iflong
Based on these standard definitions, l\else L\fi et's summarize how the translation scheme affects soundness and precision.
\begin{enumerate*}
\item Translating conditionals with
  nondeterministic \RB{jump}s is sound but path-insensitive; hence, it
  generally involves a loss of \emph{precision}.

\item The translation of lock operations in $J$ with \RB{acquire} and \RB{release}
  in $R$ is sound provided the lock is not used \emph{reentrantly}.
  According to \autoref{fig:rb-semantics}, \rb does not allow a thread
  to acquire a lock on $k$ if it already holds a lock on it;
  thus, $R$ may omit such executions even if they are possible in $J$.\footnote{
    Soundly modeling reentrant locks would require a stack-like counting
    mechanism, which goes beyond the expressiveness of plain PNs;
    hence, it belongs to future work.
  }

\item \tech's sound translation of call instructions depends on
  whether Soot can retrieve the body of the invoked closure object
  in \Ji{invokedynamic} instructions.
  This is possible in simple cases such as
  \autoref{fig:motivating-example}'s example, where \Ji{invokedynamic} is used to execute
  the \J{Runnable} lambdas \J{r_12} and \J{r_21}; more complex
  instances of \Ji{invokedynamic} would become \RB{skip} in $R$,
  which introduces unsoundness in general.

\item Other bytecode instructions that
  are not listed in \autoref{fig:syntax-jimple} are currently
  unsupported by \tech.
  The translation replaces any unsupported instruction $I$
  with a \RB{skip}, which means that $R$ doesn't model $I$'s semantics.
  This may result in a loss of soundness or precision, depending on what execution paths the unsupported instruction
  does enable or block.

\item Two variables $v_1, v_2$ that may be \emph{aliased} in $J$ are lumped together
  into a single variable $v = \alpha(v_1) = \alpha(v_2)$ in $R$;
this may introduce a loss of \emph{soundness} or, more commonly, \emph{precision}.
\end{enumerate*}
In practice, these limitations mainly imply the lack of support for certain
program features.
As our experiments in \autoref{sec:experiments} show,
\tech remains applicable on broad range of Java programs following
different concurrency patterns and features.
While we plan to remove some limitations in future work,
\tech's capabilities are consistent with the pragmatic
approach of making program analysis work with realistic programs
despite theoretical limitations~\cite{soundiness}.

\begin{expara}
  \autoref{fig:ex:rb} shows \tech's \rb translation of \autoref{fig:ex:java};
  for simplicity, \autoref{fig:ex:rb} omits the \RB{write}s of variables
  \RB{r_12}, \RB{r_21}
  (assignments on lines 4, 5) and the \RB{read}s of variables \RB{r_12}, \RB{r_21}
  (\J{new} expressions on lines 6, 7).
  Since the program does not have data-dependent path conditions,
  and there is no aliasing among variables,
  the translation to \rb is sound \emph{and} precise.
  The \rb program consists of three procedures: \RB{main},
  and two \J{Runnable} anonymous functions that
  are each thread's \J{run()} method.
\end{expara}

\begin{figure}[!bth]
  \centering
  \begin{subfigure}{0.3\textwidth}
    \centering
    \scriptsize
    \begin{tikzpicture}[-latex]
      \pnplace{p-p-t-ell}{$p$}[$t$][$\ell$]
      \pntrans{t-p-t-ell}{$p$}[$t$][$\ell$][right=of p-p-t-ell]
      \pnplace{p-p-t-ell-1}{$p$}[$t$][$\ell + 1$][right=of t-p-t-ell]
      \draw (p-p-t-ell) to (t-p-t-ell);
      \draw (t-p-t-ell) to (p-p-t-ell-1);
    \end{tikzpicture}
    \caption{$p, t, \ell, r \colon \RB{skip}, \RB{read}\ v, \RB{write}\ v$}
    \label{fig:rb2pn:skip}
  \end{subfigure}
  \begin{subfigure}{0.3\textwidth}
    \centering
    \scriptsize
    \begin{tikzpicture}[-latex]
      \pnplace{p-p-t-ell}{$p$}[$t$][$\ell$]
      \pntrans{t-p-t-ell}{$p$}[$t$][$\ell$][right=of p-p-t-ell]
      \pnplace{p-p-t-ell-1}{$p$}[$t$][$\ell'$][right=of t-p-t-ell]
      \draw (p-p-t-ell) to (t-p-t-ell);
      \draw (t-p-t-ell) to (p-p-t-ell-1);
    \end{tikzpicture}
    \caption{$p, t, \ell, r \colon \RB{goto}\:\ell'$}
    \label{fig:rb2pn:goto}
  \end{subfigure}
  \begin{subfigure}{0.3\textwidth}
    \centering
    \scriptsize
    \begin{tikzpicture}[-latex]
      \pnplace*{p-p-t-ell}{$p$}[$t$][$\ell$]
      \pntrans{t-p-t-ell-1}{$p$}[$t$][$\ell\!\to\!\ell_1$][above right=3mm and 4mm of p-p-t-ell]
      \pntrans{t-p-t-ell-2}{$p$}[$t$][$\ell\!\to\!\ell_2$][below right=3mm and 4mm of p-p-t-ell]
      \pnplace{p-p-t-ell-1}{$p$}[$t$][$\ell_1$][right=of t-p-t-ell-1]
      \pnplace{p-p-t-ell-2}{$p$}[$t$][$\ell_2$][right=of t-p-t-ell-2]
      \draw (p-p-t-ell) to[bend left] (t-p-t-ell-1);
      \draw (p-p-t-ell) to[bend right] (t-p-t-ell-2);
      \draw (t-p-t-ell-1) to (p-p-t-ell-1);
      \draw (t-p-t-ell-2) to (p-p-t-ell-2);
    \end{tikzpicture}
    \caption{$p, t, \ell, r \colon \RB{jump}\:\ell_1\:\ell_2$}
    \label{fig:rb2pn:jimp}
  \end{subfigure}
  \\[3mm]
  \begin{subfigure}{0.22\textwidth}
    \centering
    \scriptsize
    \begin{tikzpicture}[-latex]
      \pnplace{p-p-t-ell}{$p$}[$t$][$\ell$]
      \pntrans{t-p-t-ell}{$p$}[$t$][$\ell$][right=of p-p-t-ell]
      \pnplace{p-p-t-ell-1}{$p$}[$t$][$\ell + 1$][right=of t-p-t-ell]
      \pnplace*{forked}{$p'$}[$t'$][$\RB{entry}_{p'}$][above=of t-p-t-ell]
      \draw (p-p-t-ell) to (t-p-t-ell);
      \draw (t-p-t-ell) to (p-p-t-ell-1);
      \draw (t-p-t-ell) to (forked);
    \end{tikzpicture}
    \caption{$p, t, \ell, r \colon \RB{fork}\:p'\:t'$}
    \label{fig:rb2pn:fork}
  \end{subfigure}
  \begin{subfigure}{0.22\textwidth}
    \centering
    \scriptsize
    \begin{tikzpicture}[-latex]
      \pnplace{p-p-t-ell}{$p$}[$t$][$\ell$]
      \pntrans{t-p-t-ell}{$p$}[$t$][$\ell$][right=of p-p-t-ell]
      \pnplace{p-p-t-ell-1}{$p$}[$t$][$\ell + 1$][right=of t-p-t-ell]
      \pnplace{forked}{$p$}[$t$][$\joins{\ell}$][above=of t-p-t-ell]
      \pntrans{term}{$p'$}[$t'$][$\RB{exit}_{p'}$][above=of p-p-t-ell]
      \draw (p-p-t-ell) to (t-p-t-ell);
      \draw (t-p-t-ell) to (p-p-t-ell-1);
      \draw (term) to (forked);
      \draw (forked) to (t-p-t-ell);
    \end{tikzpicture}
    \caption{$p, t, \ell, r \colon \RB{join}\:t'$}
    \label{fig:rb2pn:join}
  \end{subfigure}
  \begin{subfigure}{0.25\textwidth}
    \centering
    \scriptsize
    \begin{tikzpicture}[-latex,node distance=5mm and 5mm]
      \pnplace{p-p-t-ell}{$p$}[$t$][$\ell$]
      \pntrans{t-p-t-ell}{$p$}[$t$][$\ell$][right=of p-p-t-ell]
      \pnplace{p-p-t-ell-1}{$p$}[$t$][$\ell + 1$][right=of t-p-t-ell]
      \pnplace*{v-t}{$t$}[$v$][][above=5mm of t-p-t-ell]
      \pnplace{v-t2}{}[][$t_2\ v$][below=of p-p-t-ell]
      \pnplace{v-t1}{}[][$t_1\ v$][left=of v-t2]
      \pnplace{v-tn}{}[][$t_n\ v$][below=of p-p-t-ell-1]
      \pnplace{v-tk}{}[][][below=of t-p-t-ell,draw=none]
      \node at (v-tk) {$\cdots$};
      \draw (p-p-t-ell) to (t-p-t-ell);
      \draw (t-p-t-ell) to (p-p-t-ell-1);
      \draw (v-t) to[bend right=35] (t-p-t-ell);
      \draw (t-p-t-ell) to[bend right=35] (v-t);
      \draw (v-t1) to[bend right=10] (t-p-t-ell);
      \draw (v-t2) to (t-p-t-ell);
      \draw (v-tn) to (t-p-t-ell);
      \draw[dotted] (v-tk) to[bend right=20] (t-p-t-ell);
    \end{tikzpicture}
    \caption{$p, t, \ell, r \colon \RB{acquire}\:v$}
    \label{fig:rb2pn:acquire}
  \end{subfigure}
  \begin{subfigure}{0.25\textwidth}
    \centering
    \scriptsize
    \begin{tikzpicture}[-latex,node distance=5mm and 5mm]
      \pnplace{p-p-t-ell}{$p$}[$t$][$\ell$]
      \pntrans{t-p-t-ell}{$p$}[$t$][$\ell$][right=of p-p-t-ell]
      \pnplace{p-p-t-ell-1}{$p$}[$t$][$\ell + 1$][right=of t-p-t-ell]
      \pnplace*{v-t}{$t$}[$v$][][above=5mm of t-p-t-ell]
      \pnplace{v-t2}{}[][$t_2\ v$][below=of p-p-t-ell]
      \pnplace{v-t1}{}[][$t_1\ v$][left=of v-t2]
      \pnplace{v-tn}{}[][$t_n\ v$][below=of p-p-t-ell-1]
      \pnplace{v-tk}{}[][][below=of t-p-t-ell,draw=none]
      \node at (v-tk) {$\cdots$};
      \draw (p-p-t-ell) to (t-p-t-ell);
      \draw (t-p-t-ell) to (p-p-t-ell-1);
      \draw (t-p-t-ell) to[bend left=35] (v-t);
      \draw (v-t) to[bend left=35] (t-p-t-ell);
      \draw (t-p-t-ell) to[bend left=10] (v-t1);
      \draw (t-p-t-ell) to (v-t2);
      \draw (t-p-t-ell) to (v-tn);
      \draw[dotted] (t-p-t-ell) to[bend left=20] (v-tk);
    \end{tikzpicture}
    \caption{$p, t, \ell, r \colon \RB{release}\:v$}
    \label{fig:rb2pn:release}
  \end{subfigure}
  \\[3mm]
  \begin{subfigure}{0.8\textwidth}
    \centering
    \scriptsize
    \begin{tikzpicture}[-latex]
      \coordinate (p-leftmost);
      \coordinate [right=of p-leftmost] (t-leftmost);
      \pnplace{p-p-t-ell-1}{$p$}[$t$][$\ell\: r_1$][above=3.5mm of p-leftmost]
      \pntrans{t-p-t-ell-1}{$p$}[$t$][$\ell\: r_1$][right=of p-p-t-ell-1]
      \pnplace{p-p-t-ell-2}{$p$}[$t$][$\ell\: r_2$][below=3.5mm of p-leftmost]
      \pntrans{t-p-t-ell-2}{$p$}[$t$][$\ell\: r_2$][right=of p-p-t-ell-2]
      \pnplace{p-entry}{$p'$}[$t$][$\RB{entry}_{p'}\: \ell$][right=20mm of t-leftmost]
      \node[right=of p-entry] (ddd) {$\cdots$};
      \pntrans{t-exit}{$p'$}[$t$][$\RB{exit}_{p'}\: \ell$][right=of ddd]
      \pnplace{p-ret}{$p'$}[$t$][$\rets{\ell}$][right=12mm of t-exit]
      \coordinate[right=9mm of p-ret] (t-ret);
      \pntrans{t-ret-1}{$p$}[$t$][$\rets{\ell}\: r_1$][above=3.5mm of t-ret]
      \pnplace{p-ret-ell-1-1}{$p$}[$t$][$\ell + 1\: r_1$][right=8mm of t-ret-1]
      \pntrans{t-ret-2}{$p$}[$t$][$\rets{\ell}\: r_2$][below=3.5mm of t-ret]
      \pnplace{p-ret-ell-1-2}{$p$}[$t$][$\ell + 1\: r_2$][right=8mm of t-ret-2]
      
      \node [fit=(p-entry)(t-exit),cloud,draw,cloud puffs=30,cloud puff arc=120, aspect=3, inner xsep=3mm] {};

      \draw (p-p-t-ell-1) to (t-p-t-ell-1);
      \draw (p-p-t-ell-2) to (t-p-t-ell-2);
      \draw (t-p-t-ell-1) to (p-entry);
      \draw (t-p-t-ell-2) to (p-entry);
      \draw (t-exit) to (p-ret);
      \draw (p-ret) to (t-ret-1);
      \draw (p-ret) to (t-ret-2);
      \draw (t-ret-1) to (p-ret-ell-1-1);
      \draw (t-ret-2) to (p-ret-ell-1-2);
    \end{tikzpicture}
    \caption{$p, t, \ell, r_{1, 2} \colon \RB{call}\:p'$}
    \label{fig:rb2pn:call}
  \end{subfigure}
  \caption{Fragments of Petri nets encoding \rb commands.}
  \label{fig:rb-to-pn}
\end{figure}

\subsection{Encoding of \rb into Petri Nets}
\label{sec:rb-to-pn}

Given an \rb program $R$ with threads $T$, procedures $P$, labels $B$, and
variables $V$, \tech builds a PN $N = \encoding(R) = (\Pi, \Delta, A, I)$ that over-approximates $R$'s semantics.
The general idea is that
when the command at location $\ell$
in procedure $p$ is ready to be executed by thread $t$,
the corresponding place $(p, t, \ell, r)$ is marked.
The component $r$ denotes the \emph{caller} location of $p$,
which is $\top$ if $p$ is being executed on the main thread $t_0$
or on a freshly forked thread.
Then, the corresponding transition $(p, t, \ell, r)$ fires
when the command executes;
the transition's postset denotes
the commands that will be able to execute after $c$.
With this approach, each combination of
procedure, thread, and call site in $R$
corresponds to a set of connected nodes (places and transitions) in PN $N$.

Formally, 
the set $\Pi$ of \emph{places} of PN $N$ includes, 
for every combination of
location $\ell \in B$, procedure $p \in P$, thread $t \in T$, and return label $r \in B \cup \{ \top \}$:
\begin{enumerate*}
\item A place $(p, t, \ell, r)$
  if $\ell$ is a location in $p$ and one of the following holds:
  \begin{enumerate*}[label=\emph{\alph*})]
  \item $p$ is \RB{main}, $t$ is $t_0$, and $r$ is $\top$; or
  \item there is a $\RB{fork}\:p\:t$ somewhere in the program, and $r$ is $\top$; or
  \item there is a $\RB{call}\:p$ at location $r \neq \top$ in the program.
  \end{enumerate*}
\item A return place $(p, t, \rets{\ell})$
  if $\ell$ is the location of a $\RB{call}\:p$ command.
\item A join place $(p, t, \joins{\ell}, r)$
  if $\ell$ is the location of a $\RB{join}\:t'$ command.
\item A place $(t, v)$ for every variable $v \in V$ such that
  there is an $\RB{acquire}\ v$ or a $\RB{release}\ v$ somewhere in the program.
\end{enumerate*}
The set $\Delta$ of \emph{transitions} includes,
for every combination of
location $\ell \in B$, procedure $p \in P$, thread $t \in T$, and return label $r \in B \cup \{ \top \}$:
\begin{enumerate*}
\item A transition $(p, t, \ell, r)$
  for every place $(p, t, \ell, r) \in \Pi$.
\item A return transition $(p, t, \rets{\ell}, r)$
  if $\ell$ is the location of a $\RB{call}\:p$ command;
\item Two jump transitions $(p, t, \ell\!\to\!\ell_1, r)$ and
  $(p, t, \ell\!\to\!\ell_2, r)$ if $\ell$ is the location of a $\RB{jump}\:\ell_1\:\ell_2$ command.
\end{enumerate*}
The \emph{initial marking} $I$ has one token in the place $(\RB{main}, t_0, \RB{entry}_{\RB{main}}, \top)$---the program's unique entry point---and one token in each place $(t, v)$---denoting that all locks are initially not held by any thread.

For each command $c$ at location $\ell$ in procedure $p$
executed by thread $t$ with return location $r$,
\autoref{fig:rb-to-pn} shows the set $A$ of \emph{arcs} that capture the command's semantics.
For readability, \autoref{fig:rb-to-pn} omits the caller location $r$
for commands where it's the same in all nodes (i.e., all commands but calls).
For brevity, we only describe the most complex fragments:
\begin{enumerate*}[label=\emph{\alph*})]
\item Command $\RB{fork}\:p'\:t'$ puts one token into the next location's place,
  and one token into the forked thread's entry point's place $(p', t', \RB{entry}_{p'}, r)$.
  This way, the forked thread's computation can proceed in parallel to the forking thread's.
\item Conversely, $\RB{join}\ t'$ can fire only when
  the exit transition $(p', t', \RB{exit}_{p'}, r)$ of thread $t'$
  fires (where $p'$ is the procedure that thread $t'$ is running).
\item Command $\RB{acquire}\:v$'s transition can fire only if \emph{all}
  places $(t', v)$ are marked, denoting that no thread $t'$ holds a lock on $v$.
  When it fires, it puts back a token only in place $(t, v)$
  to indicate that $t$ holds a lock on $v$.
\item Command $\RB{release}\:v$'s transition can fire only if place $(t, v)$
  is marked, denoting that thread $t$ holds a lock on $v$.
  When it fires, it puts back a token in all places $(t', v)$,
  thus allowing other threads to acquire a lock on $v$.
\item When the transition $(p, t, \ell, r)$ of command $\RB{call}\:p'$
  fires, it puts a token in the callee's entry place $(p', t, \RB{entry}_{p'}, \ell)$.
  Then, execution continues in the subnet corresponding procedure $p'$ called in thread $t$
  at call site $\ell$.
  Then, when transition $(p', t, \RB{exit}_{p'}, \ell)$ fires,
  signaling that the subnet's execution has terminated,
  a token first goes into return place $(p', t, \rets{\ell})$;
  then one of the transitions $(p, t, \rets{\ell}, r)$, for all possible callers of $p'$,
  nondeterministically fires.
  \autoref{fig:rb2pn:call} pictures two callers $r_1 \neq r_2$,
  where either of the return transitions $(p, t, \rets{\ell}, r_k)$, $k = 1, 2$,
  may fire when the subnet terminates execution.
  This nondeterministic encoding is the only aspect
  of \tech's encoding of \rb programs that loses precision:
  a PN cannot store an unbounded stack of return locations,\footnote{
    PNs with inhibitor arcs would be able to simulate this
    without loss of precision; however, their reachability problem becomes undecidable~\cite{FMMR-TimeBook-12}.
  }
  and hence \tech overapproximates it.
\end{enumerate*}

\begin{expara}
  \autoref{fig:ex:pn} shows the PN encoding of \autoref{fig:ex:rb}
  and its initial marking.
  For readability, \RB{m} abbreviates \RB{main},
  and we omit the caller in nodes since it's $\top$ everywhere.
  The PN consists of three subnets, corresponding
  to the \RB{main} (middle), \RB{r_12} (top), and \RB{r_21} (bottom)
  procedures; as well as four additional places $(\RB{t_}xy, \RB{lock}m)$
  that denote when $\RB{t_}xy$ is holding a lock on $\RB{lock}m$.
  The arcs connecting \RB{main} to the other subnets
  mark the spawning of each thread in the main thread $t_0$;
  and the colored arcs connecting transitions
  in the \RB{r_12} and \RB{r_21} subnets to
  the places $(\RB{t_}xy, \RB{lock}m)$ with matching color
  synchronize the corresponding lock commands.
\end{expara}

\begin{lemma}[Correctness of Petri net encoding] \label{lemma:pn-correct}
  Let $P$ be an \rb program and $N = \encoding(P)$ its PN encoding defined by \tech.
  Then, each sequence of state transitions according to $P$'s operational
  semantics corresponds to a marking sequences according to $N$'s semantics.
  Therefore, the semantics of $N$ is a sound over-approximation of $P$'s semantics.
\end{lemma}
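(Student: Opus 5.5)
The proof is a simulation argument: define an abstraction map $h$ from \rb states to PN markings, show $h(S_0) = I$, and show that every step $S \leadsto S'$ of \autoref{fig:rb-semantics} is matched by a nonempty firing sequence $h(S) \vdash^+ h(S')$ in $N$. Induction on the length of the \rb execution then turns every state-transition sequence of $P$ into a marking sequence of $N$, which gives the over-approximation claim.

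For a tuple $s = (t,p,\ell,L,R,\running)$ we put one token in the control place $(p,t,\ell,r)$, where $r$ is the call site of the innermost pending call. This is the location preceding the top label of $R$, and $r = \top$ if $R$ is empty. A terminated tuple contributes no control token; we may instead treat it as sitting just after $(p',t',\RB{exit}_{p'},r)$ has fired. For each lock variable $v$ the lock places are marked as follows: if some thread $t$ has $v \in L$, then only $(t,v)$ is marked; otherwise every $(t',v)$ is marked. Join places $(p,t,\joins{\ell},r)$ get one token for each terminated thread $t'$ that some pending $\RB{join}\:t'$ in $p$ can still consume. We will check that this bookkeeping is consistent with the exit transition feeding join places, and adjust the definition if it is not. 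Two preliminary facts are needed. First, every place used by $h$ exists in $\Pi$, which follows from the place-creation clauses (main/$t_0$, fork, call site). Second, we need the invariant that each lock $v$ is held by at most one thread. This is immediate from the $\RB{acquire}$ premise.

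The step case goes by the rule applied. \RB{skip}, \RB{read}, \RB{write} and \RB{goto} use the single transition of Fig.~\ref{fig:rb2pn:skip}/\ref{fig:rb2pn:goto}. \RB{jump} uses the transition $(p,t,\ell\!\to\!\ell',r)$ matching the chosen $\ell'$. \RB{fork} marks $(p',t',\RB{entry}_{p'},\top)$, matching the new tuple with empty $L$ and $R$. For \RB{acquire}, the premise that no other thread holds $k$ together with $k\notin L$ means all places $(t',k)$ are marked, so the transition is enabled, and afterwards only $(t,k)$ is marked, which is exactly $h(S')$. \RB{release} is symmetric. \RB{join} is enabled because $t'$'s exit transition has already deposited a token in the join place. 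A \RB{call} at $\ell$ moves the token to $(p',t,\RB{entry}_{p'},\ell)$. For a return, we fire the exit transition $(p',t,\RB{exit}_{p'},\ell)$ and then choose, among the nondeterministic return transitions, the one $(p,t,\rets{\ell},r)$ whose $r$ matches the next entry of $R$. So the return step is matched by a firing sequence of length two.

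The main obstacle is the call/return case together with the abstraction of the call stack. Since $N$ remembers only the innermost call site, we must show that after a return the correct outer caller context $r$ remains available, and that places for nested contexts actually exist in $\Pi$. I would handle this by strengthening the invariant: the marked control place of thread $t$ is always $(p,t,\ell,r)$ with $r$ the last call site in $R$, and the return place is shared by all callers. The over-approximation then consists precisely in allowing any caller to resume. Recursion (a procedure calling itself) is the delicate subcase, and I would check it explicitly against the place-creation clause (c).
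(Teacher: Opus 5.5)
You follow the paper's route: the same state-to-marking map, the same base case, and the same one-or-two-step matching, with the return as the two-step case. The call/return case you single out is in fact routine, because places and return transitions exist for every caller context. The real gaps are in the cases you leave open or dispatch in one line.

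First, the join bookkeeping cannot be closed with the exact matching $h(S)\vdash^+ h(S')$ you assert. The exit transition of $t'$ fires in the running-to-terminated step, which your case list omits. It puts a token in \emph{every} join place waiting on $t'$: every $\RB{join}\:t'$ site, in every thread/context copy of its subnet. A join consumes just one of these tokens, while the \rb join rule deletes the terminated tuple, so the leftover tokens have no counterpart in $h(S')$. Similarly, a second $\RB{fork}\:p'\:t'$ while the first instance is still at $\RB{entry}_{p'}$ yields two tokens in $N$ but only one tuple in the set $S$. You need to simulate up to domination, $h(S)\le m$, letting each terminated tuple of $t'$ contribute a token to every join place of a $\RB{join}\:t'$. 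Then use monotonicity of firing: a transition enabled at $h(S)$ is enabled at every $m\ge h(S)$, and the surplus tokens are carried along. The paper's outline does not settle this either, since it marks $(p,t,\joins{\ell},\rho)$ using the terminated tuple's own components, which is not how join places are indexed.

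Second, the lock cases fail even up to domination. Your invariant that each lock has at most one holder is true per identifier, not per tuple. The \RB{acquire} premise only ranges over $t'\neq t$, and several live tuples can share an identifier (e.g., a thread forked in a loop). So one instance of $t$ may acquire $k$ while another instance of $t$ holds it. In that state your $h$ marks only $(t,k)$, the acquire transition is disabled, and your claim that all places $(t',k)$ are marked is false.

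The \RB{join} rule causes a similar failure: it discards the joined tuple's lock set $L'$, thereby freeing its locks, whereas nothing in $N$ refills the places $(t'',k)$. For example, suppose \RB{main} runs $\RB{fork}\:q\:u$; $\RB{join}\:u$; $\RB{acquire}\:k$, and $q$ runs only $\RB{acquire}\:k$. Then \rb lets \RB{main} pass its \RB{acquire}, while in $N$ that transition is dead. So no choice of $h$ makes the argument go through for arbitrary \rb programs. You must either add and use explicit hypotheses, or amend the \rb rules. Suitable hypotheses would be, e.g., that threads release their locks before terminating, and that no two live tuples with the same identifier contend for a lock. The paper's outline, built on the same map, does not address these cases either.
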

\begin{proof}
  See appendix.
\end{proof}

\subsection{Encoding Properties and Implementation}
\label{sec:encoding-props}

The trace semantics of the PN $N = \translation(\encoding(J))$
built by \tech
is a path-insensitive approximation of the trace semantics of
the input bytecode program $J$.
Thus, $N$ can be analyzed with any standard PN analyzer,
such as LoLA~\cite{LoLA}.
While this approach gives flexibility,
since one may formalize a wide range of properties as
formulas in the temporal logic supported by the analyzers,
doing so requires some knowledge of how \tech
encodes programs into PNs.
To improve \tech's usability,
we built in support for common concurrency bugs:
\emph{deadlock}, \emph{livelock}, and \emph{non-termination}.

We implemented the \tech technique in a command-line tool called \tool.
\tool inputs a bytecode program $J$, a property to be verified
(a temporal logic formula, or absence of deadlocks, livelocks, or termination),
uses Soot to analyze Jimple code and its control-flow,
and produces a PN $P = \encoding(\translation(J))$ as described in previous sections.

\section{Experimental Evaluation}
\label{sec:experiments}

\nicepar{Evaluation goal.}
The goal of this experimental evaluation is to assess
\tech's practical feasibility, highlighting
its capabilities, limitations, and complementarity
in comparison to other automated Java verification tools that can analyze similar concurrency properties.

\subsection{Subjects and Setup}
\label{sec:subjects}

\nicepar{Comparable tools.\;}
According to \autoref{sec:related-work}'s discussion,
we focus on JPF (Java Pathfinder~\cite{JPF}) and JaDA~\cite{JaDA} for the following reasons:
\begin{enumerate*}
\item they support checking (absence of) deadlocks out of the box---the same properties currently supported by \tool;
\item their public repositories (especially JaDA's) include numerous examples of Java programs that demonstrate their capabilities in deadlock detection;
\item they represent two distinct approaches: JPF is based on model-checking, on top of which it implements an array of analysis techniques, whereas JaDA uses type-based static analysis and is specialized on deadlock detection;
\item their implementations have different levels of maturity: JPF is a mature tool with over 20 years of history,
  whereas JaDA is more recent and less polished, but with a ready-made, usable implementation.
\end{enumerate*}
Our experiments are not meant as a direct
\emph{comparison} between \tool and these two tools;
however, they will
show that \tool works successfully on several examples that were designed for JPF and JaDA,
indicating that our approach is viable and has potential, as well as its limitations compared to the state of the art.

\nicepar{Subjects.}
We evaluated \tool's capabilities on 39 example concurrent programs
with shared locks.
The leftmost part of \autoref{tab:results:subjects} lists these programs,
which belong to 7 groups:
\begin{enumerate*}
\item 2 are from \textsl{JPF}'s repository;
\item 14 are from \textsl{JaDA}'s repository~\cite{JaDA-examples};
\item 2 are from the JaConTeBe~\cite{JaConTeBe} \textsl{benchmarks};
\item 7 are \textsl{variants} of other examples:
  2 are variants of programs in group \textsl{JaDA}
  (\J{ClassicDeadlockN} is a generalization of \J{ClassicDeadlock},
  and \J{PhilTable} is a dining philosophers variant),
  and the other 5 use recent Java language features (up to Java 21);
\item 11 are examples that specifically scale up the \textsl{size} and complexity of the verification problems by introducing nested loops, several threads, sequences of blocks, and unbounded recursion;
\item 3 are adaptations of other classic examples translated to \textsl{Kotlin}.
\end{enumerate*}
We selected deadlock examples from JPF's and JaDA's repositories that don't use features unsupported by \tool (especially arrays, which result in a major loss of precision); for the same reason, we refactored some of the JaDA examples to make them compatible with \tool without changing their behavior. As indicated in \autoref{tab:results},
3 of the subjects are correct, and 36 include a concurrency (deadlock) bug.
Therefore, the evaluation tests the tools' capabilities both to precisely find bugs
and to soundly verify correctness.

\nicepar{Challenges.}
Although most benchmark programs
are relatively small in terms of lines, methods, or classes,
they are far from trivial, and 
introduce various kinds of challenges to even state-of-the-art tools.
In particular:
\begin{enumerate*}
\item Group \textsl{variants}
  uses language features (sealed classes, records, etc.)
  that have only been introduced in recent versions of Java;
  for approaches that work on the source level (like JPF and JaDA),
  supporting new language features is a major challenge
  that often goes beyond ``mere'' engineering.
  Notably, program \J{VirtualThreads} specifically targets a
  new \emph{concurrency} feature of Java that \tool supports out of the box.
  On the other hand, program \J{PhilTable}
  uses a lot of aliased references,
  which challenge the tools' abstraction of this feature.
\item Programs in group \emph{size}
  feature different forms of complex control flow,
  which challenges the \emph{scalability} of concurrency analysis:
  unbounded recursive synchronous calls (\J{RecUnbounded}, \J{Chordv2});
  recursive thread spawning and joining (\J{RecJoin});
  thread synchronization with unbounded loops
  (\J{WhileCnt2N2T}),
  deeply nested calls (\J{Interleaving}),
  and nested locking of many lock variables (\J{Nested200T2});
  and an 8-thread generalization of the dining philosophers problem (\J{DiningP8T}).
  The remaining programs in this group
  feature behaviors that challenge concurrency analysis
  with complex \emph{dependency} patterns (\J{InnerThreads},
  \J{JoinUnderLock}, \J{RecursiveJoinUnderLockDeadlock}),
  such as circular waiting between parent and child threads,
  or require a context-sensitive analysis (\J{DeeperContextDeadlock}).

\end{enumerate*}

\nicepar{Setup.}
We ran each example with \tool, JPF, and JaDA
with 15-minute timeout;
we classify a tool's output as:
\begin{enumerate*}
\item success \success\ if it correctly detects a deadlock or
  establishes that there is none;
\item false negative \falseneg\ if it misses an existing deadlock;
\item false positive \falsepos\ if it reports a spurious bug;
\item time out/out of memory \timeout;
\item failure due to unsupported features \unsupp.
\end{enumerate*}
All experiments ran on an Apple Macbook M3 Max with 36~GB RAM and macOS~15.6.1.
\tool used Soot~4.6.0 and LoLA~2.0;
JPF commit~\texttt{0f2f2} used Java~11;
the JaDA tool was run through its website\footnote{\url{http://jada.cs.unibo.it/}}
since it is not available for download.

\begin{table}[!tb]
  \centering
\setlength{\tabcolsep}{2.4pt}
  \renewcommand{\arraystretch}{0.5}
  \scriptsize
  \begin{tabular}{crrrrr|rrrr|ccc}
    \toprule
    \multicolumn{1}{c}{\textsc{group}} & \multicolumn{1}{c}{\textsc{name}} & \multicolumn{1}{c}{\textsc{exp}} & \multicolumn{1}{c}{\textsc{loc}} & \multicolumn{1}{c}{\textsc{classes}} & \multicolumn{1}{c}{\textsc{methods}} & \multicolumn{1}{|c}{$|\textrm{PN}|$} & \multicolumn{1}{c}{\textsc{soot}} & \multicolumn{1}{c}{\textsc{enc}} & \multicolumn{1}{c}{\textsc{ver}} & \multicolumn{3}{c}{\textsc{outcome}} \\
    &&&&&&&[s]&[s]&[s]& {\scriptsize \textsc{jPe}} & {\scriptsize \textsc{JPF}} & {\scriptsize \textsc{JaDA}} \\
    \midrule
    \multirow{2}{*}{\textsl{JPF}} & \texttt{BankTransfer} & \isBug & 18 & 2 & 4 & 91 & 175.9 & 0.5 & 0.1 & \success & \success & \unsupp \\
 & \texttt{DiningPhil} & \isBug & 40 & 3 & 3 & 133 & 212.1 & 1.5 & 0.5 & \success & \success & \falseneg \\
\cmidrule(lr){2-13}
\multirow{14}{*}{\textsl{JaDA}} & \texttt{AnythingTest} & \isBug & 36 & 2 & 5 & 77 & 172.4 & 0.4 & 0.2 & \success & \success & \success \\
 & \texttt{BuildNetwork} & \isBug & 52 & 3 & 4 & 112 & 157.1 & 0.5 & 0.1 & \success & \success & \falseneg \\
 & \texttt{Chord} & \isBug & 37 & 1 & 7 & 67 & 306.8 & 2.5 & 0.9 & \success & \success & \success \\
 & \texttt{ClassicDeadlock} & \isBug & 48 & 1 & 6 & 73 & 148.4 & 0.4 & 0.1 & \success & \success & \success \\
 & \texttt{ClassicPhil} & \isBug & 51 & 2 & 8 & 217 & 148.6 & 0.6 & 0.2 & \success & \timeout & \success \\
 & \texttt{DanglingThreads} & \isBug & 59 & 1 & 10 & 91 & 150.4 & 0.5 & 0.1 & \success & \success & \success \\
 & \texttt{DeadlockTwo} & \isBug & 34 & 1 & 5 & 112 & 149.8 & 0.5 & 0.1 & \success & \success & \falseneg \\
 & \texttt{GuardedLocks} & \isCorrect & 30 & 1 & 3 & 64 & 165.3 & 0.5 & 0.1 & \success & \success & \falsepos \\
 & \texttt{MayNotHappenInParallel} & \isCorrect & 26 & 1 & 3 & 53 & 166.1 & 0.4 & 0.2 & \success & \success & \success \\
 & \texttt{NetworkAllP} & \isBug & 52 & 3 & 4 & 112 & 147.8 & 0.4 & 0.1 & \success & \success & \falseneg \\
 & \texttt{PhilTableP} & \isBug & 60 & 4 & 5 & 203 & 331.1 & 3.7 & 0.8 & \success & \timeout & \falseneg \\
 & \texttt{SimpleWhile} & \isBug & 31 & 2 & 3 & 80 & 155.0 & 0.4 & 0.1 & \success & \timeout & \success \\
 & \texttt{StaticFields} & \isBug & 28 & 1 & 3 & 65 & 167.1 & 0.5 & 0.1 & \success & \success & \falseneg \\
 & \texttt{SynchMethod} & \isBug & 21 & 1 & 3 & 46 & 169.3 & 0.5 & 0.1 & \success & \success & \success \\
\cmidrule(lr){2-13}
\multirow{2}{*}{\textsl{benchmarks}} & \texttt{Deadlock1} & \isBug & 40 & 4 & 3 & 72 & 148.7 & 0.4 & 0.1 & \success & \success & \falseneg \\
 & \texttt{Deadlock2} & \isBug & 25 & 2 & 3 & 51 & 148.5 & 0.3 & 0.1 & \success & \success & \success \\
\cmidrule(lr){2-13}
\multirow{7}{*}{\textsl{variants}} & \texttt{ClassicDeadlockN} & \isCorrect & 42 & 1 & 5 & 73 & 322.4 & 2.9 & 0.8 & \success & \success & \success \\
 & \texttt{PhilTable} & \isBug & 59 & 2 & 9 & 234 & 162.0 & 88.9 & 214.5 & \timeout & \timeout & \success \\
 & \texttt{Records} & \isBug & 55 & 4 & 7 & 71 & 314.9 & 3.2 & 0.9 & \success & \unsupp & \unsupp \\
 & \texttt{Sealedclasses} & \isBug & 57 & 6 & 7 & 71 & 313.4 & 2.9 & 0.7 & \success & \unsupp & \unsupp \\
 & \texttt{Switch} & \isBug & 59 & 4 & 7 & 91 & 314.5 & 4.1 & 1.0 & \success & \unsupp & \unsupp \\
 & \texttt{VirtualThreads} & \isBug & 45 & 3 & 4 & 12 & 303.4 & 2.1 & 0.5 & \success & \unsupp & \unsupp \\
 & \texttt{Yield} & \isBug & 50 & 3 & 6 & 73 & 317.0 & 2.8 & 0.7 & \success & \unsupp & \unsupp \\
\cmidrule(lr){2-13}
\multirow{11}{*}{\textsl{size}} & \texttt{Chordv2} & \isBug & 37 & 1 & 7 & 47 & 306.3 & 2.8 & 0.7 & \success & \timeout & \falseneg \\
 & \texttt{DeeperContextDeadlock} & \isBug & 65 & 1 & 8 & 186 & 303.8 & 4.1 & 0.7 & \success & \success & \falseneg \\
 & \texttt{DiningPhilP8T} & \isBug & 60 & 3 & 19 & 666 & 276.6 & 4.9 & 0.9 & \success & \timeout & \falseneg \\
 & \texttt{InnerThreads} & \isBug & 32 & 1 & 4 & 54 & 308.8 & 2.4 & 0.5 & \success & \success & \success \\
 & \texttt{Interleaving} & \isBug & 155 & 1 & 30 & 547 & 169.1 & 1.1 & 0.3 & \success & \timeout & \falseneg \\
 & \texttt{JoinUnderLock} & \isBug & 22 & 1 & 3 & 40 & 307.7 & 2.4 & 0.7 & \success & \success & \falseneg \\
 & \texttt{Nested200T2} & \isBug & 1015 & 1 & 6 & 1870 & 271.0 & 11.8 & 4.1 & \success & \success & \timeout \\
 & \texttt{RecJoin} & \isBug & 44 & 1 & 6 & 175 & 284.9 & 2.8 & 0.7 & \success & \success & \timeout \\
 & \texttt{RecUnbounded} & \isBug & 69 & 3 & 6 & 274 & 297.6 & 3.4 & 0.7 & \success & \success & \falseneg \\
 & \texttt{RecursiveJoinUnderLock} & \isBug & 35 & 1 & 5 & 89 & 298.4 & 2.9 & 0.7 & \success & \success & \falseneg \\
 & \texttt{WhileCnt2N2T} & \isBug & 34 & 1 & 5 & 91 & 282.0 & 3.1 & 0.6 & \success & \timeout & \falseneg \\
\cmidrule(lr){2-13}
\multirow{3}{*}{\textsl{Kotlin}} & \texttt{Chord} & \isBug & 41 & 1 & 6 & 88 & 148.3 & 0.5 & 0.1 & \success & \success & \unsupp \\
 & \texttt{ClassicDeadlock} & \isBug & 29 & 1 & 2 & 64 & 147.9 & 0.5 & 0.1 & \success & \success & \unsupp \\
 & \texttt{When} & \isBug & 45 & 1 & 4 & 89 & 148.4 & 0.5 & 0.1 & \success & \success & \unsupp \\
\midrule
\textbf{total} &  &  & 2738 & 76 & 238 & 6624 & 8818.8 & 164.2 & 234.2 &  &  &  \\
\textbf{average} &  &  & 70 & 2 & 6 & 170 & 226.1 & 4.2 & 6.0 &  &  & 

 \\ \bottomrule
  \end{tabular}
\caption{Experimental subjects and results.
For each \textsc{group}
    of programs used to evaluate \tool, the table lists the \textsc{name} of each program in the group,
    whether the program is correct \isCorrect\ has a deadlock \isBug,
    its size in lines of code \textsc{loc}, number of \textsc{classes} and \textsc{methods}.
    The table also reports the size $|\textrm{PN}|$
    in number of places of \tool's PN model of the program,
    the times (in seconds) for
    \textsc{soot}'s analysis, \tool's PN \textsc{enc}oding,
    and LoLA's \textsc{ver}ification.
    The right-hand side details the \textsc{outcome} of verification with each tool: \tool, JPF, and JaDA:
    success \success, false negative \falseneg,
    false positive \falsepos,
    time out/out of memory \timeout,
    and failure due to unsupported features \unsupp.
    }
  \label{tab:results}
  \label{tab:results:subjects}
\end{table}

\subsection{Results}
\label{sec:results}

\paragraph{Performance.}
\autoref{tab:results:subjects} shows that \tool's encoding is 
succinct,
with 2.4 (6624/2738) places per source line of code on average.
Also as a result of this succinctness,
verification is fast in most examples,
typically taking under a second.
In contrast, generating the PN encoding from bytecode
takes significantly more time;
however,
most of the generation time (3.8 \emph{minutes} per subject) is taken by Soot
to extract from bytecode
the information needed for the \rb encoding,
whereas \tool's actual encoding is 
much faster
(4.2 seconds per subject).
The overall end-to-end performance of \tool is acceptable,
but there are clear margins for improvement,
in particular as
we invest more time into optimizing its bottleneck interaction with Soot.
It is also encouraging that the two outliers
(\J{Nested200T2} and especially \J{PhilTable})
that took considerably more time
resulted in larger-than-average PN models that similarly
challenged encoding and verification;
in other words, they are intrinsically more complex examples.
We did not collect the running time of JPF or JaDA:
a proper comparison of performance would require controlled experimental conditions
(not possible for JaDA whose implementation is not open source),
and would be of limited interest for a prototype tool like \tool.

\tool correctly verified
(detecting a deadlock or confirming that none exists)
38 examples.
The lone exception is \J{PhilTable},
where Soot's alias analysis induces an unnecessarily large
number of lock-thread combinations,
which results in a huge state space
on which \tool runs out of memory.
This example showcases a current limitation of \tool,
which depends on the precision of Soot's alias analysis.
The flip side is that plugging in a bespoke alias analysis---something for future work---in \tool's implementation will improve its precision without
changes to the underlying technique.

\paragraph{Language feature support.}
Since \tool targets bytecode,
it can verify programs in group \textsl{variants} that use
features that have only been available in recent Java versions.
In contrast, both JPF and JaDA
fail to analyze 5 examples in \textsl{variants}
since they do not support recent language features;
for the same reason,
JaDA also fails on the \J{BankTransfer} example that uses lambdas.
\tool can also analyze the 3 examples written in \textsl{Kotlin};
somewhat surprisingly,
JPF could also analyze them,
since the tool can be configured to directly input bytecode.
Instead, JaDA works only on source code,
and hence it is limited to programs written in Java.
As explained above, we deliberately selected these examples
to demonstrate \tool's adaptability to recent Java features and other JVM languages;
\tool has other language limitations,
which we discussed in \autoref{sec:approach}.

\paragraph{Scalability.}
\tool successfully analyzed 4 examples
in group \textsl{size}
and 3 examples in group \textsl{JaDA}
on which JPF ran out of memory.
This indicate that JPF
struggles to scale
to programs
with significant usage of recursion
(\J{ClassicPhil}, \J{DiningPhilP8T}, \J{Chordv2}, \J{PhilTableP})
or a large number of interleavings
(\J{SimpleWhile}, \J{WhileCnt2N2T}, \J{Interleaving}).
Interestingly, JPF times out too on the one example \J{PhilTable}
where \tool runs out of memory.
While JaDA's abstractions make it a generally more scalable tool,
it still timed out on 2 examples in group \textsl{size} that
\tool can analyze.
In all, these examples indicate that the kinds of complex behaviors
that \tool can analyze are somewhat
complementary to the focus of tools based on different approaches.

\paragraph{Soundness and precision.}
In all our experiments neither \tool nor JPF produced
any false positives or false negatives.
While \tech relies on approximations, which we presented formally in the paper,
that may break soundness or precision,
these experiments indicate that \tool remains practically applicable
on numerous examples despite its theoretical limitations.
In contrast,
JaDA incurred 15 false negatives and 1 false positive in the experiments.
The false negatives indicate
that its abstractions are not sound in general,
since they sometimes omit program interactions
that are feasible and may trigger a deadlock.
The false positive (\J{GuardedLocks})
is simply a result of an imprecise overapproximation
of the possible sequential executions under locking.

\paragraph{Evaluation summary.}
Our evaluation indicates that
\tool is applicable to programs using modern Java features,
including concurrency features such as virtual threads.
Despite being a prototype, \tool can tackle
programs with complex control flow (nested conditionals, loops, recursion)
and dynamic, unbounded thread creation.
Its main limitations are the lack of support for
Java features like exceptions and arrays, and its
imprecision in the presence of complex aliasing.
Overall, its current capabilities and limitations
complement other Java concurrency analysis tools.

\paragraph{Limitations.}
Our experiments are are not meant to be a direct comparison between \tool and JPF and JaDA,
both because JaDA and, especially, JPF are much more mature tools,
and because their characteristics are largely complementary.

\section{Conclusions}
This paper presented \tech: a technique to analyze concurrency properties
of Java programs based on model-checking an encoding of a flow- and context-sensitive, path-insensitive approximation of the bytecode semantics.
Two characteristics distinguish \tech's approach from other automated verification techniques for concurrent Java:
\begin{enumerate*}
\item \tech uses Petri nets (PNs) to approximate the program's concurrent behavior and to apply model checking.
\iflong
  Compared to other state/transition models, PNs model concurrency naturally and succinctly, and feature an interesting
  expressive power that goes beyond finite-state models but is still amenable to automated verification.
\fi
\item \tech operates on Java bytecode rather than source code. This brings several practical advantages, in particular in terms of robustness and flexibility. \iflong In fact,
  \tech's implementation (called \tool) works on any version of Java, whereas most similar verification tools
  are limited to (usually a subset of) language features introduced in older versions of Java.
  \tech can even verify programs written in a subset of Kotlin---another language that compiles to Java bytecode.
  \fi
\end{enumerate*}
The experiments we discussed in the paper demonstrate the practical advantages brought by \tech's design choice,
as well as its current limitations.
Even though \tool is still a prototype tool,
we could find several examples of programs that challenge other more mature Java automated concurrency verification tools
that it can analyze successfully.

\iflong
\iflong
In future work, we plan to further explore the capabilities of PNs for concurrency verification mainly in three directions.
First, we will consider PN variants with different features (e.g., inhibitor arcs, colors, etc.)
and explore their expressiveness vs.\ scalability trade-offs.
Second, we will extend \tech's capabilities by modeling other kinds of concurrency properties.
Third, we will improve \tech's precision and soundness by developing a bespoke alias analysis (and other static analyses)
that better fulfills the requirements of concurrency analysis of Java programs.
\else
  In future work, we plan to further explore the capabilities of PNs for concurrency verification by
  considering PN variants with different features
  (e.g., inhibitor arcs, colors, etc.)
and exploring their expressiveness vs.\ scalability trade-offs.
\fi
\fi

\clearpage



\clearpage
\appendix

\iflong
\section{Additional details for \autoref{sec:jimple}}

A \nt{Jimple} program is a collection of methods,
with at least one \Ji{main} method
that correspond to the program's entry point.
A \nt{Method} has a name, a typed signature, and a body
consisting of a sequence of instructions:
\begin{features}
\item[Control flow] instructions include different kinds of jumps:
  \begin{enumerate}
  \item \Ji{goto label}:
    unconditional jump to the location with a given \Ji{label};
  \item \Ji{if $\:c$ label}:
    jump to \Ji{label} if condition $c$ is true;
  \item $\Ji{switch}\ v\ (v_1 \colon \Ji{label}_1) (v_2 \colon \Ji{label}_2)\ldots$:
    if variable $v$ is equal to any $v_k$ jump to the corresponding $\Ji{label}_k$.
    For simplicity, we present a single \Ji{switch} instruction,
    even though Jimple includes two variants \Ji{lookupswitch} and \Ji{tableswitch}
    whose differences are unimportant for this work.
  \item $\Ji{return}\:v$ and \Ji{return}:
    return to the caller (possibly with return value $v$).
  \end{enumerate}

\item[Monitor] instructions correspond to the synchronization that occurs when entering a \J{synchronized} block in Java:
  \begin{enumerate}[resume]
  \item \Ji{monitor_enter $\:v$}: acquire a lock on $v$; 
  \item \Ji{monitor_exit $\:v$}: release the lock on $v$.
  \end{enumerate}

\item[Assignment] instructions perform any kind of expression evaluation
  using the SSA form discussed above:
  \begin{enumerate}[resume]
  \item $t\:\Ji{:=}\:v_1 \oplus v_2$:
    evaluate binary operation $\oplus$ with arguments $v_1$ and $v_2$,
    and store the result in variable~$t$;
  \item $t\:\Ji{:=} \:\odot v$:
    evaluate unary operation $\odot$ with argument $v$,
    and store the result in variable $t$;
  \item $t\:\Ji{:=}\:c$:
    store constant value $c$ in variable $t$.
  \end{enumerate}

\item[Call] instructions are also only used in the right-hand side
  of an assignment; if a call is used as an instruction or returns \J{void},
  we will write $\epsilon := \Ji{invoke}\:m$, where $\epsilon$ denotes
  a dummy local variable.
  Precisely, JVM bytecode includes five variants of call instructions:
  \Bc{static} for \J{static} methods,
  \Bc{virtual} for instance methods,
  \Bc{interface} for abstract interface calls,
  \Bc{special} for constructors and \J{super} calls,
  and \Bc{dynamic} for lambdas.
  For simplicity, \autoref{fig:syntax-jimple}
  presents a single form $\Ji{invoke}\ m\: v_0\: v_1\ldots$
  to denote a call of method $m$ with given actual arguments $v_0$, $v_1$, \ldots;
  however, we will distinguish between the various kinds of calls
  when relevant to the encoding of Jimple into \rb.
  
\item[Other] instructions:
  \begin{enumerate}[resume]
  \item \Ji{noop}: do nothing.
  \end{enumerate}  
\end{features}

\section{Additional details for \autoref{sec:rb}}

\rb commands include:
\begin{features}
\item[Control flow] commands:
  \begin{enumerate*}
  \item \RB{goto label}:
    unconditional jump to the location with a given \Ji{label};
  \item \RB{jump label$_1$ label$_2$}:
    nondeterministic jump to either $\RB{label}_1$ or $\RB{label}_2$;
  \item \RB{fork $\:p\:t$}:
    spawn a parallel thread $t$ running procedure $p$.
  \item \RB{join $\:t$}:
    wait for thread $t$ to terminate.
  \end{enumerate*}
  We assume that the thread identifiers in \RB{fork} and \RB{join} commands
  are distinct from all other identifiers,
  and that the \RB{main} procedure runs on thread $t_0$.

\item[Synchronization] commands:
  \begin{enumerate*}[resume]
  \item \RB{acquire $\:v$}: acquire a lock on $v$;
  \item \RB{release $\:v$}: release the lock on $v$.
  \end{enumerate*}

\item[Access] commands:
  \begin{enumerate*}[resume]
  \item \RB{read $\:v$}: read variable $v$;
  \item \RB{write $\:v$}: write variable $v$.
  \end{enumerate*}

\item[Call] command
  \begin{enumerate*}[resume]
    \item \RB{call $\:p$}, which executes procedure $p$ and returns
  to the caller without returning any value.
  \end{enumerate*}
  
\item[Other] commands:
  \begin{enumerate*}[resume]
  \item \RB{skip}: do nothing.
  \end{enumerate*}
\end{features}  
\fi

\section{Additional details for~\autoref{sec:rb-soundness-precision}}

\begin{description}
\item[Conditionals:] Translating conditionals with
  nondeterministic \RB{jump}s is sound but path-insensitive; hence, it
  generally involves a loss of \emph{precision}, because $R$ may
  include paths that are unfeasible in $J$ due
  to unsatisfiable path conditions.\\
  For example, the \Ji{error} location is unreachable in \autoref{fig:soundness-examples-conditionals}'s Jimple program (left)
  because the condition of the \Ji{if} is identically false;
  however, it becomes reachable in the \rb translation (right)
  where the condition is abstracted away.

\item[Reentrant locks:] The translation of lock operations in $J$ with \RB{acquire} and \RB{release}
  in $R$ is sound provided the lock is not used \emph{reentrantly}.
  According to \autoref{fig:rb-semantics}, \rb does not allow a thread
  to acquire a lock on $k$ if it already holds a lock on it;
  thus, $R$ may omit such executions even if they are possible in $J$.\footnote{
    Soundly modeling reentrant locks would require a stack-like counting
    mechanism, which goes beyond the expressiveness of plain PNs;
    hence, it belongs to future work.
  } \\
  For example, if \J{lock} is a reentrant lock variable,
  the \Ji{error} location is reachable in \autoref{fig:soundness-examples-locking}'s Jimple program (left),
  where the same thread acquires \J{lock} twice in a row;
  in contrast, the second \RB{acquire} of \J{lock} in the corresponding \rb program (right)
  never executes (since \autoref{fig:rb-semantics}'s semantics does not allow it),
  and hence \RB{error} becomes effectively unreachable.

\item[Calls:] \tech's sound translation of call instructions depends on
  the capabilities of Soot's analysis. In particular,
  \Ji{invokedynamic} instructions are correctly translated only if
  Soot can retrieve the body of the invoked closure object.
  This is possible in simple cases such as
  \autoref{fig:motivating-example}'s example, where \Ji{invokedynamic} is used to execute
  the \J{Runnable} lambdas \J{lock_12} and \J{lock_21}; more complex
  instances of \Ji{invokedynamic} would become \RB{skip} in $R$,
  which introduces unsoundness in general. \\
  For example, consider an \Ji{invoke dl}, where \Ji{dl} resolves
  to a method that causes a deadlock when executed;
  if Soot cannot resolve \Ji{dl} statically,
  the call becomes simply \RB{skip} in \rb,
  and hence \tech would unsoundly conclude that there is no deadlock.

\item[Unsupported instructions:] Other bytecode instructions that
  are not listed in \autoref{fig:syntax-jimple} are currently
  unsupported by \tech.
  The translation replaces any unsupported instruction $I$
  with a \RB{skip}, which means that $R$ doesn't model $I$'s semantics.
  This may result in a loss of soundness or precision, depending on what execution paths the unsupported instruction
  does enable or block. \\
  For example, consider the Jimple code in \autoref{fig:soundness-examples-unsupported} (left):
  the \Ji{throw} instruction unconditionally jumps to location $\ell_2$, where a matching \Ji{catch} block is defined;
  therefore, location $\ell_1$ is unreachable in the original program, since it is just after the \Ji{throw}.
  Since \tech does not currently support exception-related instructions such as \Ji{throw},
  this instruction becomes simply a \RB{skip} in the \rb translation (right);
  thus, $\ell_1$ is reachable in the \rb program, whereas $\ell_2$ is unreachable.
  As a result of this translation, analyzing the \rb program in lieu of the Jimple program
  may be unsound (if the code at $\ell_2$ introduces an error, which would go undetected in the \rb program),
  imprecise (if the unreachable code at $\ell_1$ introduces an error, which would be spuriously detected in the \rb program),
  or possibly neither (if the code at $\ell_1$ and $\ell_2$ do not affect program correctness w.r.t.\ the properties
  \tech analyzes).

\item[Aliasing:] Two variables $v_1, v_2$ that may be \emph{aliased} in $J$ are lumped together
  into a single variable $v = \alpha(v_1) = \alpha(v_2)$ in $R$;
this may introduce a loss of \emph{soundness} or, more commonly, \emph{precision}.\\
  \autoref{fig:soundness-examples-aliasing-unsound} shows an example where
  imprecise aliasing information introduces unsoundness:
  the two threads \Ji{t_1} and \Ji{t_2} acquire \Ji{lock_1} and \Ji{lock_2}
  in opposite order, which clearly may result in a deadlock.
  If the alias analysis erroneously determines that the two lock variables may be aliased,
  \tech models them as a single variable \RB{lock};
  thus, in the \rb encoding, a deadlock cannot occur because both threads try to acquire
  the same lock twice in a row (a case of reentrant locking, which is ignored by \rb's semantics
  as we discussed above).
  \\
  \autoref{fig:soundness-examples-aliasing-imprecise} shows an example where
  imprecise aliasing information introduces imprecision:
  thread \Ji{t_1} acquires \Ji{lock_1} and then \Ji{lock_2},
  whereas thread \Ji{t_2} acquires \Ji{lock_2} and then \Ji{lock_3}.
  If these are all distinct locks, no deadlock can occur.
  However, if the alias analysis erroneously determines that lock variables \Ji{lock_1} and \Ji{lock_3} may be aliased,
  \tech models them as a single variable \RB{lock_13};
  thus, in the \rb encoding, a deadlock may occur because both threads
  are in contention to acquire \RB{lock_13} and \RB{lock_2} in opposite order.
\end{description}

\begin{figure}[!bt]
  \centering
  {\lstset{style=displayed-nln}\begin{subfigure}[t]{0.37\paperwidth}
    \begin{minipage}[t]{0.19\paperwidth}
\begin{lstlisting}[language=Jimple]
       if false error
       goto ok
error: $\ldots$
   ok: $\ldots$
\end{lstlisting}
    \end{minipage}
    \hfill
    \begin{minipage}[t]{0.17\paperwidth}
\begin{lstlisting}[language=RockBottom]
       jump $\ell$ error
    $\ell$: goto ok
error: $\ldots$
   ok: $\ldots$
\end{lstlisting}
    \end{minipage}
    \caption{A conditional in bytecode (left) whose \rb translation (right) is imprecise.}
    \label{fig:soundness-examples-conditionals}
  \end{subfigure}
\hfill
\begin{subfigure}[t]{0.42\paperwidth}
    \begin{minipage}[t]{0.22\paperwidth}
\begin{lstlisting}[language=Jimple]
      monitor_enter lock
      monitor_enter lock
error: $\ldots$
\end{lstlisting}
    \end{minipage}
    \hfill
    \begin{minipage}[t]{0.19\paperwidth}
\begin{lstlisting}[language=RockBottom]
       acquire lock
       acquire lock
error: // unreachable
\end{lstlisting}
    \end{minipage}
    \caption{A lock used reentrantly in bytecode (left)
      whose \rb translation (right) is unsound.}
    \label{fig:soundness-examples-locking}
  \end{subfigure}
\\[3mm]
\begin{subfigure}[t]{0.5\paperwidth}
    \begin{minipage}[t]{0.16\paperwidth}
\begin{lstlisting}[language=Jimple]
   throw e
$\ell_1$: $\ldots$

// catch(e) block
$\ell_2$: $\ldots$
\end{lstlisting}
    \end{minipage}
    \hfill
    \begin{minipage}[t]{0.16\paperwidth}
\begin{lstlisting}[language=RockBottom]
   skip
$\ell_1$: $\ldots$

// catch(e) block
$\ell_2$: $\ldots$
\end{lstlisting}
    \end{minipage}
    \caption{Translating an unsupported bytecode instruction such as \Ji{throw} (left) to \RB{skip} in \rb (right)
      may introduce unsoundness (if $\ell_2$ is an error location, unreachable in \rb but reachable in Jimple)
      or imprecision (if $\ell_1$ is an error location, reachable in \rb but unreachable in Jimple).}
    \label{fig:soundness-examples-unsupported}
  \end{subfigure}
\\[3mm]
\begin{subfigure}[t]{0.38\paperwidth}
    \begin{minipage}[t]{0.19\paperwidth}
\begin{lstlisting}[language=Jimple]
// thread t_1
monitor_enter lock_1
monitor_enter lock_2

// thread t_2
monitor_enter lock_2
monitor_enter lock_1
\end{lstlisting}
    \end{minipage}
    \hfill
    \begin{minipage}[t]{0.13\paperwidth}
\begin{lstlisting}[language=RockBottom]
// thread t_1
acquire lock
acquire lock

// thread t_2
acquire lock
acquire lock
\end{lstlisting}
    \end{minipage}
    \caption{If the alias analysis of the bytecode program on the left
      indicates that \J{lock_1} and \J{lock_2}
      may be aliases even though they are actually not,
      the \rb translation (right) would render them as a single \J{lock} variable. This is unsound
    because the \rb program would not deadlock even if the bytecode program it translates may deadlock.}
    \label{fig:soundness-examples-aliasing-unsound}
  \end{subfigure}
\hfill
\begin{subfigure}[t]{0.38\paperwidth}
    \begin{minipage}[t]{0.19\paperwidth}
\begin{lstlisting}[language=Jimple]
// thread t_1
monitor_enter lock_1
monitor_enter lock_2

// thread t_2
monitor_enter lock_2
monitor_enter lock_3
\end{lstlisting}
    \end{minipage}
    \hfill
    \begin{minipage}[t]{0.14\paperwidth}
\begin{lstlisting}[language=RockBottom]
// thread t_1
acquire lock_13
acquire lock_2

// thread t_2
acquire lock_2
acquire lock_13
\end{lstlisting}
    \end{minipage}
    \caption{If the alias analysis of the bytecode program on the left
      indicates that \J{lock_1} and \J{lock_3}
      may be aliases even though they are actually not,
      the \rb translation (right) would render them as a single \J{lock_13} variable. This is imprecise
    because the \rb program may deadlock even if the bytecode program it translates obviously does not deadlock.}
    \label{fig:soundness-examples-aliasing-imprecise}
  \end{subfigure}
}\caption{Examples of Jimple programs whose \rb translation by \tech may be unsound or imprecise.}
  \label{fig:soundness-examples-bytecode2rb}
\end{figure}

\section{Additional details for \autoref{sec:rb-to-pn}}

The PN encoding of commands in \autoref{fig:rb-to-pn} is as follows:
\begin{enumerate}[label=\emph{\alph*})]
\item Commands \RB{skip}, \RB{read}, and \RB{write}
  transfer the token from place $(p, t, \ell, r)$ to the next location's
  place $(p, t, \ell + 1, r)$.
\item Command $\RB{goto}\ \ell'$
  transfers the token from the place $(p, t, \ell, r)$ to the unconditional jump's 
  target's place $(p, t, \ell', r)$.
\item Similarly, command $\RB{jump}\:\ell_1\:\ell_2$
  nondeterministically transfers the token to either place $(p, t, \ell_1, r)$
  or place $(p, t, \ell_2, r)$.
\item Command $\RB{fork}\:p'\:t'$ puts one token into the next location's place,
  and one token into the forked thread's entry point's place $(p', t', \RB{entry}_{p'}, r)$.
  This way, the forked thread's computation can proceed in parallel to the forking thread's.
\item Conversely, $\RB{join}\ t'$ can fire only when
  the exit transition $(p', t', \RB{exit}_{p'}, r)$ of thread $t'$
  fires (where $p'$ is the procedure that thread $t'$ is running).
\item Command $\RB{acquire}\:v$'s transition can fire only if \emph{all}
  places $(t', v)$ are marked, denoting that no thread $t'$ holds a lock on $v$.
  When it fires, it puts back a token only in place $(t, v)$
  to indicate that $t$ holds a lock on $v$.
\item Command $\RB{release}\:v$'s transition can fire only if place $(t, v)$
  is marked, denoting that thread $t$ holds a lock on $v$.
  When it fires, it puts back a token in all places $(t', v)$,
  thus allowing other threads to acquire a lock on $v$.
\item When the transition $(p, t, \ell, r)$ of command $\RB{call}\:p'$
  fires, it puts a token in the callee's entry place $(p', t, \RB{entry}_{p'}, \ell)$.
  Then, execution continues in the subnet corresponding procedure $p'$ called in thread $t$
  at call site $\ell$.
  Then, when transition $(p', t, \RB{exit}_{p'}, \ell)$ fires,
  signaling that the subnet's execution has terminated,
  a token first goes into return place $(p', t, \rets{\ell})$;
  then one of the transitions $(p, t, \rets{\ell}, r)$, for all possible callers of $p'$,
  nondeterministically fires.
  \autoref{fig:rb2pn:call} pictures two callers $r_1 \neq r_2$,
  where either of the return transitions $(p, t, \rets{\ell}, r_k)$, $k = 1, 2$,
  may fire when the subnet terminates execution. \\
  This nondeterministic encoding of returns is the only aspect
  of \tech's encoding of \rb programs that loses precision:
  a PN cannot store an unbounded stack of return locations,\footnote{
    PNs with inhibitor arcs would be able to simulate this
    without loss of precision; however, their reachability problem becomes undecidable~\cite{FMMR-TimeBook-12}.
  }
  and hence \tech overapproximates it with a nondeterministic choice.
\end{enumerate}

\begin{proof}[Proof outline of \autoref{lemma:pn-correct}]
  Let $S_0 \leadsto S_1 \leadsto \cdots$
  be a sequence of states
  in $P$'s semantics.
Each $S_k$ maps to a marking $m_k$ of $N$ as follows.
  For every $(t, p, \ell, K, R, \tau) \in S_k$,
  let $\rho = \top$ if $R = \emptyset$,
  and $\rho = r$ if $R = R' + [r + 1]$.
  Then, the following places are marked in $N$:
  \begin{enumerate*}
  \item $(p, t, \ell, \rho)$ if $\tau = \running$;
  \item $(p, t, \joins{\ell}, \rho)$ if $\tau = \terminated$;
  \item $(t, k)$ for every $k \in K$.
  \end{enumerate*}
  Furthermore, for every $k'$ such that $k' \not\in K'$ for every
  tuple $(t', p', \ell', K', R', \tau) \in S$, all places $(t', k')$ are also marked.
In particular, this mapping translates $P$'s initial state
  $\{(t_0, \RB{main}, \RB{entry}_{\RB{main}}, \emptyset, \emptyset, \running)\}$
  into $N$'s initial marking.

  Then, one can show by induction that if $S_k \leadsto S_{k+1}$
  then $m_k \vdash^{1,2} m_{k+1}$,
  where $\vdash^{1,2}$ denotes one or two steps in $N$'s semantics.
  Precisely, the only scenario when a step in $P$'s evaluation
  corresponds to \emph{two} consecutive markings in $N$'s semantics
  is when $s = (t, p, \RB{exit}_p, L, R + [\ell], \running) \in S_k \leadsto S_k \setminus \{s\} \cup \{(t, p', \ell, L, R, \running)\}$;
  in this case, $m_k \vdash m_k' \vdash m_{k+1}$
  where $m_k'$ is the marking reached after transition $(p, t, \RB{exit}_{p}, \ell)$
  fires, moving a token from place $(p, t, \RB{exit}_{p}, \ell)$
  to $(p, t, \rets{\ell})$,
  and $m_{k+1}$ corresponds to the nondeterministically chosen return location
  (as shown in \autoref{fig:rb2pn:call}).
\end{proof}

\section{Additional details for \autoref{sec:encoding-props}}

\tech takes care of expressing these properties
for the PN analyzer, so
that the user does not have to directly interact with the latter.
For this work, we focus on three widely useful concurrency properties:
termination, deadlock, and livelock.
In future work, we will extend this approach to support
other concurrency properties,
such as data races and atomicity---even though
one can already analyze these properties by directly expressing them
in the language of the PN analyzer.

\begin{table}[!bt]
  \centering
  \setlength{\tabcolsep}{1pt}
\begin{tabular}{clcl}
    \toprule
    && \multicolumn{1}{c}{\textsc{monitor}} & \multicolumn{1}{c}{\textsc{ctl temporal logic formula}} \\
    \midrule
    $\phi_T$ & termination & \multirow{3}{*}{\begin{tikzpicture}[baseline=(current bounding box.center),node distance=8mm and 3mm,-latex]
                               \node (start) {};
                     \pntrans{t-entry}{$p$}[$t$][$\RB{entry}\ r$][below=3mm of start]
                     \pntrans{t-exit}{$p$}[$t$][$\RB{exit}\ r$][below=of t-entry]

                     \begin{scope}[every node/.append style={fill=pncol,draw=pncol}]
                     \pnplace{p-spawn}{$p$}[$t\ r$][\textcolor{pncol}{\spawnMark}][right=of t-entry]
                     \pnplace{p-term}{$p$}[$t\ r$][\textcolor{pncol}{\termMark}][below=of p-spawn]
                     \coordinate (mid) at ($(p-spawn)!0.5!(p-term)$);
                     \pntrans{t-done}{$p$}[$t\ r$][\textcolor{pncol}{\terminated}][right=of mid]
                     \end{scope}

                     \begin{scope}[draw=pncol,thick]
                     \draw (t-entry) to (p-spawn);
                     \draw (t-exit) to (p-term);
                     \draw (p-spawn) to (t-done);
                     \draw (p-term) to (t-done);
                   \end{scope}
                 \end{tikzpicture}
                   }
    & $\Atl\Ftl[\pndeadlock \land \forall p\colon P, t\colon T, r\colon L, \tau\colon \{\spawnMark, \termMark\} \cdot m(p, t, r, \tau) = 0]$
    \\
    $\phi_D$ & deadlock & \rule{0pt}{17mm}
   & $\Etl\Ftl\,\Etl\Gtl[\pndeadlock \land
       \exists p\colon P, t\colon T, r\colon L, \tau\colon \{\spawnMark, \termMark\} \cdot m(p, t, r, \tau) \neq 0]$
    \\
    $\phi_L$ & livelock & 
   & $\Etl\Ftl\,\Etl\Gtl[\neg \pndeadlock \land
       \exists p\colon P, t\colon T, r\colon L, \tau\colon \{\spawnMark, \termMark\} \cdot m(p, t, r, \tau) \neq 0]$
    \\
    \bottomrule
  \end{tabular}
\caption{How \tech checks for various concurrency properties.}
  \label{tab:properties}
\end{table}

In order to analyze a property $\phi$ on a PN $N$,
\tech first \emph{extends} $N$ with additional places and transitions
that act as \emph{monitors} of $N$'s state components
that are useful to check for the properties.
While using monitors is not strictly needed,
it helps simplify the temporal logic formula that expresses the properties,
which in turn results in performance benefits.
\autoref{tab:properties} shows the monitors and temporal logic formulas
built by \tech to verify properties \emph{termination} $\phi_T$ (which holds if the program always terminates),
\emph{deadlock} $\phi_D$ (which holds if the program may deadlock) and
\emph{livelock} $\phi_L$ (which holds if the program may livelock).
These three properties use the same monitor:
for each subnet corresponding to thread $t$ executing procedure $p$ with return location $r$, the monitor adds two places $(p, t, r, \spawnMark)$ and
$(p, t, r, \termMark)$, and a transition $(p, t, r, \terminated)$.
As soon as a subnet's \RB{entry} (resp.~\RB{exit}) transition fires,
place $(p, t, r, \spawnMark)$ (resp.~$(p, t, r, \termMark)$)
gets a token; when both places are marked, transition $(p, t, r, \terminated)$
fires and empties them.
Therefore, $(p, t, r, \spawnMark)$ is marked iff the subnet is executing.

With this monitor, \autoref{tab:properties}'s CTL formulas
express the three properties of termination, deadlock, and livelock.
The formulas use predicate \pndeadlock, which is built-in most PN analyzers
and denotes a PN deadlock: a situation where all transitions in the PN
are permanently disabled.
In \tech's encoding $N$, a PN deadlock does not necessarily
correspond to a deadlock of program $J$:
if all threads have completed execution normally (or have never been started),
a PN deadlock simply denotes normal termination;
but if some threads have not completed execution,
a PN deadlock corresponds to a program deadlock.
\tech correctly distinguishes between deadlocks and termination
also in cases where some threads are forked but not joined by
expressing properties on the monitor's places as follows:
\begin{enumerate}
\item \emph{Termination}: the program eventually terminates
  iff all the threads that have started eventually finish execution.
  Thus, $\phi_T$ checks that, along 
  all execution paths in the future (CTL operator $\Atl\Ftl$),
  the PN deadlocks (predicate \pndeadlock, which indicates that no transition can fire)
  and all \spawnMark\ and \termMark\ places are empty (which indicates that no thread is executing).
\item \emph{Deadlock}: the program is stuck and cannot reach proper termination or make progress.
  Thus, $\phi_D$ checks that,
  along all execution paths from some point on in the future
  (CTL operator $\Etl\Ftl\,\Etl\Gtl$),
  the PN deadlocks (predicate \pndeadlock, which indicates that no transition can fire) and at least one
  \spawnMark\ and \termMark\ place remains not empty
  (which indicates that some thread has started but cannot complete execution).
  The non-empty places
  correspond to the deadlocked threads.
\item \emph{Livelock}: the program as a whole can continue execution,
  but one or more threads cannot terminate or make progress.
  Thus, $\phi_L$ checks that, 
  along all execution path from some point on in the future
  (CTL operator $\Etl\Ftl\,\Etl\Gtl$),
  the PN does \emph{not} deadlock ($\neg \pndeadlock$, since the program as a whole still runs)
  but at least one
  \spawnMark\ and \termMark\ place remains not empty indefinitely
  (which indicates that some thread has started but cannot complete execution). The non-empty places
  correspond to the threads that can't reach termination.
\end{enumerate}

\begin{expara}
  \autoref{fig:ex:pn}'s parts in \colorbox{pncol}{\textcolor{white}{purple}}
  monitor the termination of each thread as in \autoref{tab:properties}.
  Since the main thread does not join the two spawned threads,
  this is a situation where a deadlock of the PN (i.e., $\Etl\Ftl\,\Etl\Gtl[\pndeadlock]$ holds)
  does not imply a deadlock of the original program.
  This is where \tech's monitor is used to distinguish the two scenarios:
  if all procedures terminate normally, \autoref{tab:properties}'s formula
  $\phi_D$ will never hold, since all $\spawnMark, \termMark$ places will
  eventually be empty.
  In contrast, when the PN reaches the state where the \colorbox{red!20}{light red}
  places are marked, places
  $(\RB{t_12}, \spawnMark)$ and
  $(\RB{t_21}, \spawnMark)$ will remain marked too;
  thus, the state exposes a genuine deadlock in \autoref{fig:ex:java}'s original program,
  which satisfies formula $\phi_D$.
\end{expara}

\subsection{Implementation Details and Limitations}
\label{sec:limitations}

We implemented the \tech technique in a command-line tool called \tool.
\tool inputs a bytecode program $J$,
uses Soot to analyze Jimple code and its control-flow,
and produces a PN $P = \encoding(\translation(J))$ as described in previous sections.

As we discussed previously in this section,
\tech's output $P$ is a \emph{sound} model of
the execution order of instructions in $J$ provided
the following conditions are met:
\begin{enumerate}
\item Soot's alias analysis of lock variables in $J$
  is sufficiently accurate (as explained in \autoref{sec:jimple-to-rb});
\item $J$ does not use unsupported bytecode features
  (mainly, exceptions and \Ji{invokedynamic} calls that Soot cannot resolve);
\item $J$ does not use any lock reentrantly
  (i.e., a thread acquires a lock on a variable it's already locking).
\end{enumerate}
Under these conditions, if a temporal logic property $\phi$ holds on $P$,
then it also holds on $J$.

Conversely, \tech's output $P$'s \emph{precision}
as a model of $J$'s executions depends on several factors:
\begin{enumerate} \item The accuracy of the points-to alias analysis also affects precision.
  Currently, \tool uses Soot's Spark whole-program analysis,
  which induces a significant loss of precision with features such as array indexing.
\item Path-sensitive information is ignored in $P$,
  which means that all control-flow paths in $J$ are feasible in $P$.
\item Context-sensitive information is overapproximated by $P$,
  so a call may nondeterministically return to any of its
  possible call sites in $J$.
\item Finally, below we outline how \tech's modeling of loops and recursion
  affects its precision.
\end{enumerate}
The experiments in \autoref{sec:experiments}
assess the practical impact of \tech's current limitations.

\begin{figure}[!tbh]
  \begin{subfigure}[c]{0.35\linewidth}
\begin{lstlisting}[language=JavaRecent,style=displayed,numbers=none]
void main() {
  do {
     Thread t = new Thread(() -> { $p$ });
     t.start();
     if ($\ldots$) t.join();
  } while ($\ldots$);
}
\end{lstlisting}
  \end{subfigure}
  \hspace{3mm}
  \begin{subfigure}[c]{0.25\linewidth}
    \scriptsize
    \begin{tabular}{rl}
      & \RB{def main} \\
      $\RB{entry}_{\RB{main}}\colon$ & \RB{begin} \\
      $\ell_0\colon$ & \RB{fork $\ p$ t} \\
      $\ell_1\colon$ & \RB{jump}$\ \ell_2\ \ell_3$ \\
      $\ell_2\colon$ & \RB{join t} \\
      $\ell_3\colon$ & \RB{jump}$\ \ell_0, \RB{exit}_{\RB{main}}$ \\
      $\RB{exit}_{\RB{main}}\colon$ & \RB{end}
    \end{tabular}
  \end{subfigure}
  \\[2mm]
  \begin{subfigure}[c]{0.82\textwidth}
    \centering
    \scriptsize
    \begin{tikzpicture}[-latex,node distance=7mm and 7mm,marked/.style={draw=blue!40,very thick}]
      \pnplace{p-m-e}{\RB{m}}[$t_0$][$\RB{entry}_{\RB{m}}$][tokens=1]
      \pntrans{t-m-e}{\RB{m}}[$t_0$][$\RB{entry}_{\RB{m}}$][right=of p-m-e]
      \pnplace{p-m-0}{\RB{m}}[$t_0$][$\ell_0$][right=of t-m-e]
      \pntrans{t-m-0}{\RB{m}}[$t_0$][$\ell_0$][right=of p-m-0]
      \pnplace{p-m-1}{\RB{m}}[$t_0$][$\ell_1$][right=of t-m-0]
      \pntrans{t-m-1-2}{\RB{m}}[$t_0$][$\ell_1 \to \ell_2$][right=of p-m-1]
      \pntrans{t-m-1-3}{\RB{m}}[$t_0$][$\ell_1 \to \ell_3$][above=of t-m-1-2]
      \pnplace{p-m-2}{\RB{m}}[$t_0$][$\ell_2$][right=of t-m-1-2]
      \pntrans{t-m-2}{\RB{m}}[$t_0$][$\ell_2$][right=of p-m-2]
      \pnplace{p-m-j2}{\RB{m}}[$t_0$][$\joins{\ell_2}$][below=of t-m-2]
      \pnplace{p-m-3}{\RB{m}}[$t_0$][$\ell_3$][right=of t-m-2]
      \pntrans{t-m-3-x}{\RB{m}}[$t_0$][$\ell_3 \to \RB{exit}_{\RB{m}}$][right=of p-m-3]
      \pntrans{t-m-3-0}{\RB{m}}[$t_0$][$\ell_3 \to \ell_0$][above=12mm of p-m-3]
      \pnplace{p-m-x}{\RB{m}}[$t_0$][$\RB{exit}_{\RB{m}}$][right=of t-m-3-x]
      \pntrans{t-m-x}{\RB{m}}[$t_0$][$\RB{exit}_{\RB{m}}$][right=of p-m-x]
\pnplace{p-p-e}{$p$}[\RB{t}][$\RB{entry}_{p}$][below=12mm of t-m-0]\
      \node (p-body) [right=of p-p-e] {$\cdots$};
      \pntrans{t-p-x}{$p$}[\RB{t}][$\RB{exit}_{p}$][right=of p-body]

      \foreach \x/\y in {p-m-e/t-m-e,t-m-e/p-m-0,p-m-0/t-m-0,t-m-0/p-m-1,p-m-1/t-m-1-2,p-m-1/t-m-1-3,t-m-1-2/p-m-2,p-m-2/t-m-2,t-m-2/p-m-3,p-m-3/t-m-3-x,p-m-3/t-m-3-0,t-m-3-x/p-m-x,p-m-x/t-m-x,t-m-1-3/p-m-3,t-m-0/p-p-e,t-p-x/p-m-j2,p-m-j2/t-m-2} {\draw (\x) to (\y);}

      \draw (t-m-3-0) -| (p-m-0);
      \draw[dotted] (p-p-e) to (p-body);
      \draw[dotted] (p-body) to (t-p-x);
    \end{tikzpicture}
  \end{subfigure}
    \caption{Petri net encoding of an \rb program that models a Java program that spawns threads in a loop. For readability, procedure \J{main} is abbreviated as \J{m} in the Petri net.}
    \label{fig:ex:loop-fork}
\end{figure}

\begin{figure}
  \begin{subfigure}[c]{0.35\linewidth}
\begin{lstlisting}[language=JavaRecent,style=displayed,numbers=none]
var lock = new Lock();

static void main() {
  rec();
  lock.unlock();
}

static void rec() {
  lock.lock();
  if ($\ldots$) { lock.unlock(); rec(); }
}
\end{lstlisting}
  \end{subfigure}
  \hspace{5mm}
  \begin{subfigure}[c]{0.25\linewidth}
    \scriptsize
    \begin{tabular}{rl}
      & \RB{def main} \\
      $\RB{entry}_{\RB{main}}\colon$ & \RB{begin} \\
      $\ell_0\colon$ & \RB{call rec} \\
      $\ell_1\colon$ & \RB{release lock} \\
      $\RB{exit}_{\RB{main}}\colon$ & \RB{end} \\
      \\
      & \RB{def rec} \\
      $\RB{entry}_{\RB{rec}}\colon$ & \RB{begin} \\
      $r_0\colon$ & \RB{acquire lock} \\
      $r_1\colon$ & \RB{jump}$\ r_2\ \ \RB{exit}_{\RB{rec}}$ \\
      $r_2\colon$ & \RB{release lock} \\
      $r_3\colon$ & \RB{call rec} \\
      $\RB{exit}_{\RB{rec}}\colon$ & \RB{end} \\
    \end{tabular}
  \end{subfigure}
  \\[3mm]
  \begin{subfigure}[c]{0.95\textwidth}
    \centering
    \scriptsize
    \begin{tikzpicture}[-latex,node distance=10mm and 6mm]
      \pnplace{p-m-e}{\RB{m}}[][$\RB{en}_{\RB{m}}\top$][tokens=1]
      \pntrans{t-m-e}{\RB{m}}[][$\RB{en}_{\RB{m}}\top$][right=of p-m-e]
      \pnplace{p-m-0}{\RB{m}}[][$\ell_0 \top$][right=of t-m-e]
      \pntrans{t-m-0}{\RB{m}}[][$\ell_0 \top$][right=of p-m-0]
\pnplace{p-r-e}{\RB{r}}[][$\RB{en}_{\RB{r}}\:\ell_0$][below=22mm of t-m-0]
      \pntrans{t-r-e}{\RB{r}}[][$\RB{en}_{\RB{r}}\:\ell_0$][right=of p-r-e]
      \pnplace{p-r-0}{\RB{r}}[][$r_0\:\ell_0$][right=of t-r-e]
      \pntrans{t-r-0}{\RB{r}}[][$r_0\:\ell_0$][right=of p-r-0,acq]
      \pnplace{p-r-1}{\RB{r}}[][$r_1\:\ell_0$][right=of t-r-0]
      \pntrans{t-r-1-2}{\RB{r}}[][$r_1 \!\!\to\!\!r_2\:\ell_0$][above=of t-r-0]
      \pnplace{p-r-2}{\RB{r}}[][$r_2\:\ell_0$][right=of t-r-1-2]
      \pntrans{t-r-2}{\RB{r}}[][$r_2\:\ell_0$][right=of p-r-2,rel]
      \pnplace{p-r-3}{\RB{r}}[][$r_3\:\ell_0$][right=of t-r-2]
      \pntrans{t-r-3}{\RB{r}}[][$r_3\:\ell_0$][right=of p-r-3]
      \pnplace{p-r-x}{\RB{r}}[][$\RB{ex}_{\RB{r}}\:\ell_0$][right=of t-r-3]
      \pntrans{t-r-x}{\RB{r}}[][$\RB{ex}_{\RB{r}}\:\ell_0$][right=of p-r-x]
      \pntrans{t-r-1-x}{\RB{r}}[][$r_1 \!\!\to\!\!\RB{ex}_{\RB{r}}\:\ell_0$][right=of p-r-1]
      \pnplace{p-r-ret}{\RB{r}}[][$\rets{\ell_0}$][above=of t-r-x]
      \pntrans{t-r-ret}{\RB{r}}[][$\rets{\ell_0}\:\top$][left=of p-r-ret]
\pnplace{p-m-1}{\RB{m}}[][$\ell_1 \top$][above=of t-r-2]
      \pntrans{t-m-1}{\RB{m}}[][$\ell_1 \top$][left=of p-m-1,rel]
      \pnplace{p-m-x}{\RB{m}}[][$\RB{ex}_{\RB{m}}\top$][left=of t-m-1]
      \pntrans{t-m-x}{\RB{m}}[][$\RB{ex}_{\RB{m}}\top$][left=of p-m-x]
\pnplace{p-rr-e}{\RB{r}}[][$\RB{en}_{\RB{r}}\:r_3$][below=22mm of t-r-3]
      \pntrans{t-rr-e}{\RB{r}}[][$\RB{en}_{\RB{r}}\:r_3$][left=of p-rr-e]
      \pnplace{p-rr-0}{\RB{r}}[][$r_0\:r_3$][left=of t-rr-e]
      \pntrans{t-rr-0}{\RB{r}}[][$r_0\:r_3$][left=of p-rr-0,acq]
      \pnplace{p-rr-1}{\RB{r}}[][$r_1\:r_3$][left=of t-rr-0]
      \pntrans{t-rr-1-2}{\RB{r}}[][$r_1 \!\!\to\!\!r_2\:r_3$][below=of p-rr-1]
      \pnplace{p-rr-2}{\RB{r}}[][$r_2\:r_3$][left=of t-rr-1-2]
      \pntrans{t-rr-2}{\RB{r}}[][$r_2\:r_3$][left=of p-rr-2,rel]
      \pnplace{p-rr-3}{\RB{r}}[][$r_3\:r_3$][left=of t-rr-2]
      \pntrans{t-rr-3}{\RB{r}}[][$r_3\:r_3$][left=of p-rr-3]
      \pnplace{p-rr-x}{\RB{r}}[][$\RB{ex}_{\RB{r}}\:r_3$][right=of p-rr-e]
      \pntrans{t-rr-x}{\RB{r}}[][$\RB{ex}_{\RB{r}}\:r_3$][right=of p-rr-x]
      \pntrans{t-rr-1-x}{\RB{r}}[][$r_1 \!\!\to\!\!\RB{ex}_{\RB{r}}\:r_3$][below=of p-rr-0]
\pnplace{p-rr-ret}{\RB{r}}[][$\rets{r_3}$][right=of t-rr-x]
      \pntrans{t-rr-ret-l0}{\RB{r}}[][$\rets{r_3}\:\ell_0$][above=of t-rr-x]
      \pntrans{t-rr-ret-r3}{\RB{r}}[][$\rets{r_3}\:r_3$][below=of t-rr-x]

      \foreach \x/\y in {p-m-e/t-m-e,t-m-e/p-m-0,p-m-0/t-m-0,t-m-0/p-r-e,p-m-1/t-m-1,t-m-1/p-m-x,p-m-x/t-m-x,p-r-e/t-r-e,t-r-e/p-r-0,p-r-0/t-r-0,t-r-0/p-r-1,p-r-1/t-r-1-2,p-r-1/t-r-1-x,t-r-1-2/p-r-2,p-r-2/t-r-2,t-r-2/p-r-3,p-r-3/t-r-3,p-r-x/t-r-x,
        t-r-3/p-rr-e,p-rr-e/t-rr-e,t-r-x/p-r-ret,p-r-ret/t-r-ret,t-r-ret/p-m-1,t-rr-e/p-rr-0,p-rr-0/t-rr-0,t-rr-0/p-rr-1,p-rr-1/t-rr-1-2,t-rr-1-2/p-rr-2,p-rr-2/t-rr-2,t-rr-2/p-rr-3,p-rr-3/t-rr-3,p-rr-1/t-rr-1-x,p-rr-x/t-rr-x,t-rr-x/p-rr-ret,p-rr-ret/t-rr-ret-l0,p-rr-ret/t-rr-ret-r3,t-rr-ret-l0/p-r-x,t-rr-ret-r3/p-rr-x} {\draw (\x) to (\y);}

      \draw (t-r-1-x) -| (p-r-x);
      \draw (t-rr-1-x) -| (p-rr-x);

      \coordinate[below=4mm of t-rr-3] (below-t-rr-3);
      \draw (t-rr-3) -- (below-t-rr-3) -| (p-rr-e);
    \end{tikzpicture}
  \end{subfigure}
  \caption{Petri net encoding of an \rb program that models a Java program that recursively acquires and releases a lock. For readability, procedures \J{main} and \J{rec} are abbreviated as \J{m} and \J{r} in the Petri net; labels \RB{entry} and \RB{exit} are abbreviated as \RB{en} and \RB{ex}; and the places for lock variable \RB{lock} are not shown explicitly, but the transitions
  corresponding to \RB{acquire} and \RB{release} are highlighted \stylenode{acq} and \stylenode{rel}.}
  \label{fig:ex:recursion}
\end{figure}

\subsubsection{Threads and Loops.}
\autoref{fig:ex:loop-fork}
demonstrates how unbounded loops are handled
by \tech, and how they affect precision.
The Java program shown there
starts a certain number of threads
in a loop and joins some of them.
As long as the code $p$ executed by the spawned threads
does not introduce a circular wait dependency with the
main thread, the program does not deadlock.
Let us see how this behavior is captured accurately by the \rb model,
and in turn by the PN encoding, shown in the same figure.

First, notice that the set $T$ of thread identifiers in the \rb program
is $\{t_0, \RB{t}\}$;
in general, since threads are distinguished by their program identifiers up to aliasing,
$T$ is always finite in an \rb program.
For each started thread, a new tuple $(p, \RB{t}, \RB{entry}_{p}, \emptyset, \emptyset, \running)$
is added to the state $S$;
correspondingly, a token is added to place $(p, \RB{t}, \RB{entry}_{p})$
in the PN,
modeling the asynchronous execution of several threads.

In general, the PN and the \rb program have infinitely many executions,
whereas the Java program will probably only execute the loop
a finite number of times.
Another source of overapproximation is the conditional
in the loop body, which may be deterministic in the Java
program but is nondeterministic in the \rb and PN models.
Nevertheless, the analysis of deadlock behavior is
still precise on the PN model, since whether
the Java program deadlocks does not depend on how many times
the loop or conditional are executed.
This example demonstrates how the counting capabilities of PNs
are sufficient to capture recurring thread spawning patterns
while retaining precision in the analysis of certain concurrency
properties.

\subsubsection{Recursion.}
\autoref{fig:ex:recursion} demonstrates how recursion in handled in \tool.
The Java program shown there starts executing a recursive method \RB{rec};
with each recursive call, a lock is acquired before deciding whether to
continue with another recursive call or return to the caller.
The \rb program in \autoref{fig:ex:recursion} has the same behavior
as the Java program, except for the nondeterministic control flow;
concretely, this means that the \rb program has infinitely many possible executions,
one for each possible maximum recursion depth $n \geq 0$.

\autoref{fig:ex:recursion} also shows (with minor simplifications discussed in the caption)
the PN encoding the \rb program built by \tech.
The top row of nodes corresponds to the procedure \RB{main}.
The rest of the PN consists of two structurally isomorphic subnets, each encoding procedure \RB{rec};
precisely, there is one subnet for each invocation site of \RB{rec}:
the subnet in the middle of the picture corresponds to \RB{call rec} at $\ell_0$,
while the bottom subnet corresponds to \RB{call rec} at $r_3$.
When execution in the latter subnet terminates---signaled by a token in place $(\RB{r}, \rets{r_3})$---the call nondeterministically returns to either of the two call sites.

This nondeterminism overapproximates the behavior of the \rb program
(and thus the Java program)
since it includes computations where $n > 0$ nested recursive calls return abruptly to \RB{main}.
This may reduce precision, since the PN
includes executions that are infeasible in the original program.
As the experiments in \autoref{sec:experiments} demonstrate,
this limitation in principle
does not always impact the practical capabilities of \tool:
as long as the additional executions introduced by the overapproximation
do not generate spurious violations of the concurrency properties of interest,
the loss of precision is immaterial.

\end{document}